\documentclass[journal]{IEEEtran}
\usepackage[utf8]{inputenc}
\usepackage[noadjust]{cite}
\usepackage{tikz}
\usepackage{amsmath}
\usepackage{amsfonts}
\usepackage{amsthm}
\usepackage{mathtools}
\usepackage{nccmath}
\usepackage{bm}
\usepackage{graphicx}
\usepackage{xcolor}
\usepackage{xurl}
\usepackage{hyperref}
\usepackage{standalone}
\usepackage{booktabs}
\usepackage{array}
\usepackage{multirow}
\usepackage[ruled,vlined]{algorithm2e}
\usepackage[normalem]{ulem}
\usepackage{newtxmath}
\usepackage{fontawesome}
\usepackage{mathrsfs}
\usepackage{amsmath, amsthm}
\usepackage{bbm}
\usepackage{enumitem}

\newtheorem{theorem}{Theorem}
\newtheorem{lemma}{Lemma}
\newtheorem{corollary}{Corollary}
\newtheorem{definition}{Definition}

\newtheorem{assumption}{Assumption}

\hypersetup{
  colorlinks=true,
  linkcolor=black,
  citecolor=black,
  filecolor=black,
  urlcolor=blue,
}

\DeclarePairedDelimiter{\ceil}{\lceil}{\rceil}
\DeclarePairedDelimiter{\abs}{\lvert}{\rvert}
\DeclarePairedDelimiter{\floor}{\lfloor}{\rfloor}

\title{Fairness-Guaranteed Online Power Allocation Policies for EV Fast Charging Stations}

\author{
Can Berk Saner,~\IEEEmembership{Member,~IEEE,}
Yong-Sheng Soh, Antonios Varvitsiotis
\thanks{C. B. Saner is with the Department of Engineering, Newcastle University, Newcastle upon Tyne, NE1 7RU, U.K., and also with the Department of Electrical and Electronics Engineering, \"{O}zye\u{g}in University, Istanbul, 34794, T\"{u}rkiye (e-mail: ncbs4@newcastle.ac.uk).} 
\thanks{Y.-S. Soh is with the Department of Mathematics, National University of Singapore, 119076, Singapore (e-mail: matsys@nus.edu.sg).}
\thanks{A. Varvitsiotis is with the Engineering Systems and Design Pillar, Singapore University of Technology and Design, 487372, Singapore (e-mail: antonios@sutd.edu.sg).}
}

\begin{document}

\maketitle

\begin{abstract}
The rapid expansion of electric vehicles (EVs) necessitates scalable and efficient fast charging station (FCS) infrastructure.
These stations often operate in oversubscribed configurations where the total port rating exceeds a station-level cap reflecting infrastructure limits, grid constraints or market setpoints.
In such settings, ensuring fairness in real-time power allocation is essential to prevent user bias and secure equitable access to limited resources while maximizing infrastructure utilization.
This task is further complicated by state-of-charge dependent EV power limits defined by charge curves, for which accurate data is often unavailable.
This paper introduces two fairness-guaranteed online power allocation policies: \textsc{Fair-Opap-C} for conventional FCSs with continuously adjustable power delivery, and \textsc{Fair-Opap-M} for modular FCSs composed of discrete assignable power modules.
Unlike existing methods, these algorithms require no prior knowledge of charge curves, utilizing only instantaneous power requests available via standard protocols.
We formalize fairness with a unified framework encompassing envy-freeness, Pareto efficiency, and proportionality, and establish theoretical guarantees for both algorithms. The algorithms rely on lightweight operations, achieving near-linear and logarithmic scalability for the conventional and modular cases, respectively.
Comprehensive evaluations show the proposed methods achieve superior performance across various metrics among seven benchmarks from EV charging and fair division literature.
Furthermore, they are orders of magnitude faster than optimization-based approaches, with runtimes below 1 ms for up to 300 EVs, validating their suitability for real-time deployment on hardware-constrained edge devices.
\end{abstract}

\begin{IEEEkeywords}
electric vehicles, EV fast charging, fairness, online algorithms, power allocation.
\end{IEEEkeywords}

\IEEEpeerreviewmaketitle
\vspace{-0.3cm}
\section{Introduction}

\IEEEPARstart{T}{he} world is undergoing a major transformation in electrified transportation, with the plug-in hybrid and battery electric vehicles (EVs) sales reaching close to 18 million in 2024 \cite{gevo2025}. To accelerate adoption, expanding public charging infrastructure is crucial to address concerns over limited driving range and range anxiety while ensuring equitable access for those without home charging options \cite{ieapolicy}. High-power DC fast charging reaching up to 400 kW is a key enabler for this transition, providing charging sessions similar to traditional refueling experience. Reflecting this trend, the growth of public multiport fast charging stations (FCSs) surpassed that of traditional AC chargers in 2023, with several government initiatives driving further expansion \cite{gevo2025}.

Amid the expansion of fast charging networks, FCSs are being developed with an increasing number of charging ports to enhance availability and accommodate growing demand. In pursuit of this goal, FCSs may be designed or operated with an \textit{oversubscribed} configuration \cite{ampcontrol, lee2021adaptive}, where the combined power ratings of the ports exceed the station’s total capacity.

Capacity constraints arise for two main reasons. First, physical and interconnection limits such as transformer and service ratings, feeder headroom, and cable thermal limits can restrict available power, particularly at rural or highway locations with weak grid infrastructure \cite{usdot}. Second, FCSs may operate under an exogenous, possibly time-varying cap on station power, imposed by a site energy management system or an aggregator, for participation in utility programs or for demand charge management \cite{buckreus2021optimization}. In such capacity-constrained FCSs, the station may be unable to deliver the rated power to every port simultaneously \cite{silber2024analysis}. A \textit{power allocation policy} is therefore required to distribute available power among connected EVs.

\begin{figure}
    \centering
    \includegraphics[width=0.85\columnwidth]{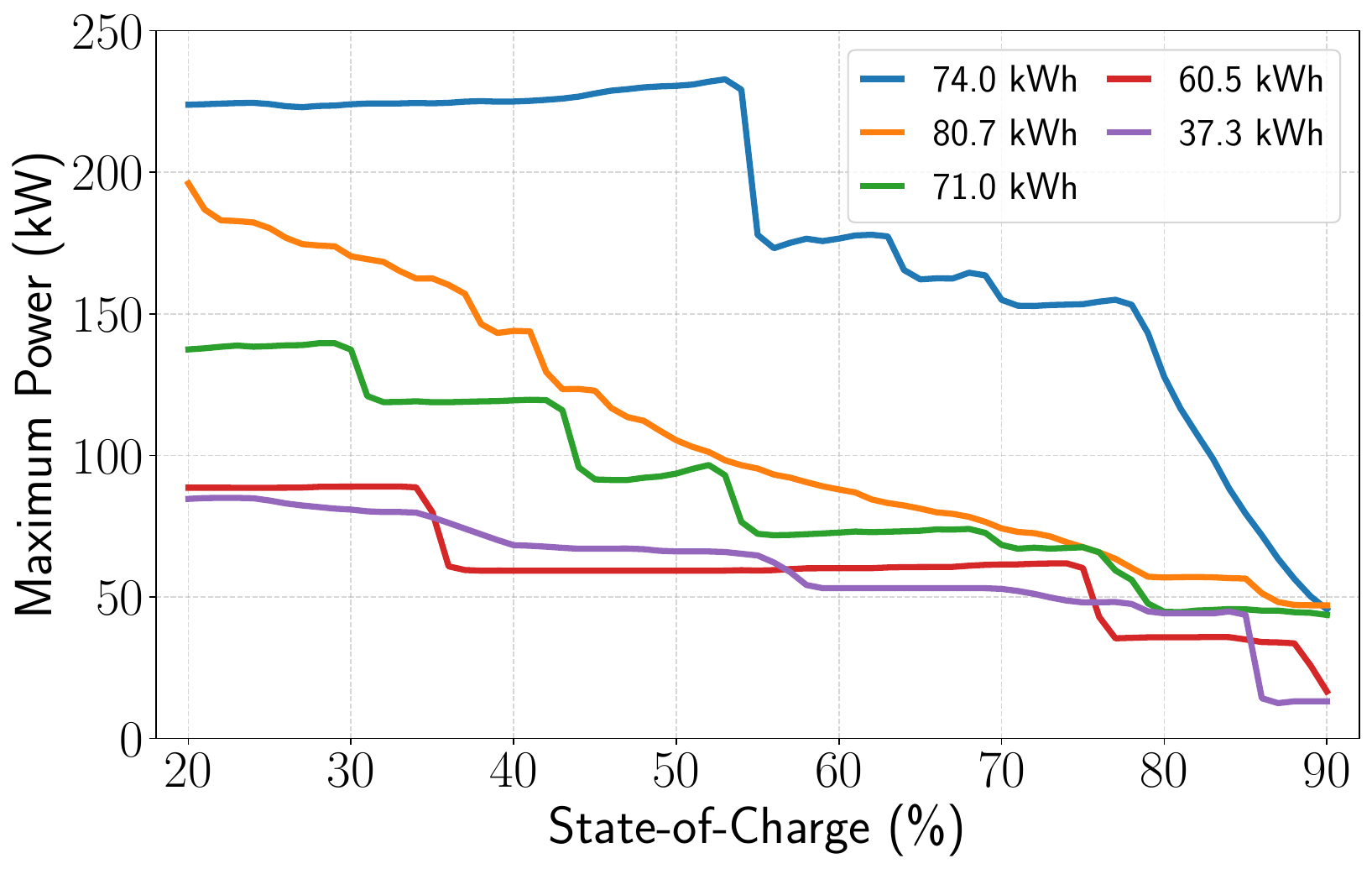}\vspace{-0.2cm}
    \caption{Charge curves of different EV models (Data source: \cite{evdatabase}).}
    \label{fig:chage_curves}
    \vspace{-0.3cm}
\end{figure}

A wide body of literature is devoted to EV power allocation strategies and charging scheduling (see \cite{elghanam2024optimization} for a recent comprehensive review). However, the vast majority of these studies focus on traditional AC charging strategies, which are unsuitable for fast charging settings \cite{ren2023optimal}. This is because the primary objectives in AC charging typically include reducing charging costs, demand-side management, renewable energy utilization, and vehicle-to-grid (V2G) applications, all of which leverage the time flexibility of AC charging sessions. In contrast, the fast charging process offers little to no flexibility, as the primary goal is to charge EVs as quickly as possible. Therefore, fast charging strategies should prioritize \textit{efficiency}, by maximizing the utilization of limited infrastructure, and \textit{fairness}, by equitably allocating power among connected EVs.

A key distinction in fast charging is the dynamic nature of EV power demand, which varies significantly throughout a charging session. The \textit{charge curve} of an EV defines the maximum deliverable power based on its state-of-charge (SoC). EV charge curves differ considerably across models, as illustrated in Fig.~\ref{fig:chage_curves} for several commercial EVs. Without proper consideration of the charge curve in power allocation policies, an EV may be assigned more power than it can accept, leading to sub-optimal or infeasible allocations \cite{kim2023energy}, causing poor utilization of already limited station capacity.

\vspace{-0.2cm}\subsection{Related Work}

Several charge curve-aware power allocation policies have been proposed in the literature~\cite{buckreus2021optimization, chen2021smoothed, lin2023surrogate, limmer2024combination, morstyn2020conic,
qian2023impact,
yu2020hierarchical, schaden2021smart, 
tan2023fair}, including our previous work~\cite{saner2023charge, saner2024social}. These can be broadly categorized as heuristic- and optimization-based approaches. Heuristic methods~\cite{buckreus2021optimization, lin2023surrogate, limmer2024combination} are simple to implement and computationally efficient, but generally lack guarantees of optimality and even feasibility. In contrast, optimization-based strategies~\cite{morstyn2020conic, qian2023impact, yu2020hierarchical, schaden2021smart, tan2023fair, saner2023charge, saner2024social} provide optimality guarantees but tend to be computationally demanding, which hinders their deployment in real-world applications, particularly on edge devices.

Notably, most of these methods rely on the availability of battery-specific data, such as internal resistance or open-circuit voltage, or a charge curve database for various EV models. This data is either used as a lookup table in heuristic approaches or fitted into a functional form for use in optimization-based models. However, such data may not be readily available in practice. Moreover, several factors, including ambient temperature, battery degradation, and user charging behavior, can cause temporal variations in battery characteristics \cite{thorgeirsson2020investigation, blazek2023effect, li2024diffcharge}. Consequently, even two EVs of the same model may exhibit different charging behavior, making pre-obtained charge curve data potentially unreliable.

In the absence of detailed data and under inherent uncertainties, an \textit{online power allocation policy} becomes necessary for real-time decision-making. Such a policy allocates power solely based on the instantaneous power requests reported by EVs, without relying on prior charge curve knowledge. This is feasible under standard communication protocols (e.g., ISO 15118~\cite{iso15118}), which support periodic reporting of requested power or current by the EV. However, only a limited number of existing methods (e.g., \cite{limmer2024combination, buckreus2021optimization}) support online allocation. These are predominantly heuristic and lack guarantees of fairness, as formally defined in the following sections.

Another important consideration is that most existing power allocation policies assume a \textit{conventional FCS architecture}, where each charger is independently connected to a dedicated port and power is allocated continuously. However, the industry is increasingly shifting toward \textit{modular multiport architectures} for improved scalability and flexibility. Leading manufacturers have already adopted modular designs in their commercial products~\cite{abb,kempower,evbox}. These systems comprise discrete power modules and multiple charging ports, resulting in non-continuous power allocation governed by module availability. This introduces a discrete combinatorial structure to the allocation problem. Fig.~\ref{fig:arch} illustrates the architectural differences between conventional and modular FCS designs.

The discrete nature of modular FCSs makes most power allocation strategies developed for conventional FCSs unsuitable. Despite their growing deployment, literature accounting for modular operational constraints remains limited~\cite{lin2023surrogate, saner2024social}, and none incorporate the online setting.

\begin{figure}
    \centering
\includegraphics[width=0.85\columnwidth]{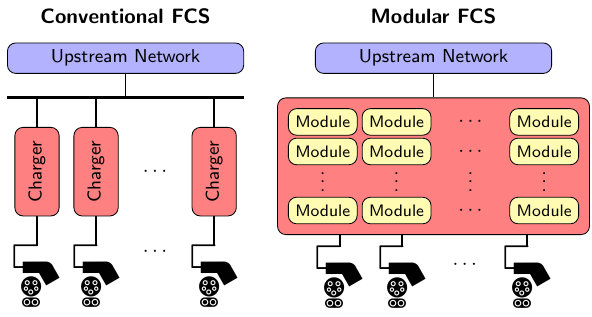}\vspace{-0.16cm}
    \caption{Conventional vs. modular architectures.}
    \label{fig:arch}
    \vspace{-0.4cm}
\end{figure}

\vspace{-0.2cm}\subsection{Contributions}

Existing power allocation methods typically assume conventional FCS architectures, depend on prior battery or charge curve data, and lack theoretical guarantees in online settings. To address these limitations, we propose computationally efficient, fairness-guaranteed online power allocation algorithms for both conventional and modular EV fast charging stations.

The key contributions of this paper are summarized as follows:

\begin{itemize}[left=0.4em]
    \item Drawing from fair division theory, we formalize a fairness framework for EV power allocation, with tailored formulations for conventional and modular FCS architectures. The framework incorporates envy-freeness, Pareto efficiency, and proportionality, providing a principled basis for designing power allocation policies across architectures.
    \item We develop two online power allocation algorithms, \textsc{Fair-Opap-C} for conventional and \textsc{Fair-Opap-M} for modular FCSs, that allocate power based on real-time EV power requests, without relying on prior charge curve data. For both algorithms, we establish formal guarantees of fairness under the proposed framework.
    \item The proposed algorithms are computationally efficient: \textsc{Fair-Opap-C} scales near linearly and \textsc{Fair-Opap-M} logarithmically in the number of connected EVs. Their implementation uses lightweight operations such as sorting and comparisons, making them well-suited for real-time and edge deployment.
\end{itemize}

The remainder of this paper is organized as follows. Sec.~\ref{sec:prelim} presents the problem formulation. Sections~\ref{sec:fairopap_c} and~\ref{sec:fairopap_m} detail the proposed online power allocation policies and theoretical guarantees for conventional and modular FCSs, respectively. Sec.~\ref{sec:eval} describes the evaluation setup, and Sec.~\ref{sec:results} discusses the empirical results. Finally, Sec.~\ref{sec:conclusion} concludes the paper.

\section{Preliminaries}\label{sec:prelim}

We adopt the following notation throughout the paper. Ordered sets are denoted by $\mathcal{X}$ (except the big-$\mathcal{O}$ notation), with cardinality represented as $\abs{\mathcal{X}}$. Scalars and indices are denoted by $x$ or $X$, vectors by $\boldsymbol{x}$, and functions by $x(\cdot)$.

\subsection{System Model and Problem Setting}

We consider a discrete-time horizon $\mathcal{H}$ with time-slots of duration $\Delta T$, taken as 0.5 minutes in this study. The FCS is equipped with $n^{\text{port}}$ ports, each with a rated power $p^{\text{port}}$ kW. Total available power is limited by a station-level power cap $p^{\text{CS}}_t$ kW, which may represent a fixed physical limit (e.g., transformer rating) or a time-varying exogenous bound set by a site energy management system or aggregator. This study focuses on capacity-constrained scenarios where $p^{\text{CS}}_t < n^{\text{port}}p^{\text{port}}$, necessitating an allocation policy to distribute limited power among the set of connected EVs $\mathcal{E}_t$.

For each EV $i \in \mathcal{E}_t$, the maximum power it can receive during time-slot $t \in \mathcal{H}$ is $p^{\text{max}}_{i,t} := \Psi_i(s_{i,t-1})$, where $s_{i,t} \in [0,1]$ is the SoC at the end of $t$, expressed in per-unit of battery capacity, and $\Psi_i(\cdot)$ is the EV’s charge curve.

Allocating power beyond $p^{\text{max}}_{i,t}$ is inefficient, as excess power cannot be absorbed and could instead be reallocated to EVs with unmet demand. While charge curve information is critical for efficiency, such data is often unavailable or unreliable in practice. We therefore consider a setting without prior charge curve information, assuming instead a practical model where EVs periodically declare their \textit{requested power} $p^{\text{req}}_{i,t} := \min(p^{\text{max}}_{i,t}, p^{\text{port}})$. This reflects the maximum power EV $i$ can receive during $t$ and aligns with current industry practices. In particular, the ISO 15118 standard for charger-EV communication specifies a message exchange protocol for DC charging, including periodic transmission of the requested current (\texttt{CurrentDemandReq})~\cite{iso15118}. Given the target voltage, this current demand is translated into the requested power.

In this setting, power requests are received in an \textit{online} manner: revealed at the beginning of each time-slot and valid only for that slot. An online power allocation policy takes the power requests from connected EVs at the start of time-slot $t$ to produce a \textit{feasible allocation} that adheres to charging station constraints. The resulting allocation specifies the power set-points applied to EV charging ports during that time-slot.

This work develops tailored online power allocation policies for both conventional and modular FCS architectures, with \textit{fairness} as the central design objective. 

\subsection{Fairness in Resource Allocation and EV Charging}

Allocating finite station capacity ($p^{CS}_{t}$) among multiple EVs can be posed as a fair division problem, one that is well studied in economics, social choice theory, and computer science \cite{brandt2016handbook}. Two primary frameworks are used to define a \textit{fair allocation}: the \textit{welfarist approach} and the \textit{axiomatic approach} \cite{plaut2021algorithms}.

The welfarist approach defines a good outcome by maximizing a single, global function that represents the collective well-being of all connected EVs.
These approaches are typically used in the fair EV charging context, e.g., \cite{saner2024social,limmer2024combination, ucer2019internet, theodoropoulos2022proportionally, tsaousoglou2023fair}, where a methodology is formulated as an optimization problem or an algorithm to achieve a welfarist goal.

The axiomatic approach defines a fair allocation as an outcome satisfying a checklist of desirable properties, or axioms, formalized by Varian \cite{varian1974}. The three canonical axioms are \textit{envy-freeness}, ensuring no individual prefers another's allocation over their own; \textit{Pareto efficiency}, ensuring no resources are wasted; and \textit{proportionality}, which guarantees a basic minimum share. While providing a clear benchmark, the comprehensive application of these properties in EV charging remains limited, with most works focusing only on envy-freeness \cite{gerding2019fair, tan2023fair}.

This paper adopts the axiomatic approach as its foundation. To the best of our knowledge, this is the first work to comprehensively address and formally guarantee this trio of fairness properties for online EV power allocation.

\section{Fair-Opap-C: Online Power Allocation Policy for Conventional FCSs}\label{sec:fairopap_c}

In FCSs with conventional architecture, the power delivery from each port can be continuously adjusted between 0 and $p^{\text{port}}$, subject to the total power limit of the charging station, $p^{\text{CS}}$. Given a set of connected EVs $\mathcal{E}$ at a particular time\footnote{Hereafter, we omit $t$ from the notation, where appropriate, for brevity.}, the feasible set of power allocations is defined as:

\begin{equation}
\mathcal{P}_{\mathcal{E}} = \left\{ \boldsymbol{p} \in \mathbb{R}^{\abs{\mathcal{E}}} \mid 0 \leq p_{i} \leq p^{\text{port}}, \forall i \in \mathcal{E}, \ \sum_{i\in\mathcal{E}}p_{i}\leq p^{\text{CS}} \right\}. 
\end{equation}

To design a fair power allocation policy that generates a feasible power allocation $\boldsymbol{p} \in \mathcal{P}_{\mathcal{E}}$ given the power requests $\{p^{\text{req}}_{i}\}_{i\in\mathcal{E}}$, we first define a utility function for an EV $i$ as:

\begin{equation}\label{eq:utility_c}
    u_{i}(p_{i}) = \min \left(\frac{p_{i}}{p^{\text{req}}_{i}}, 1\right).
\end{equation}

The utility function is capped at one for $p_{i} \geq p^{\text{req}}_{i}$, reflecting that allocating more power than requested does not provide additional benefit to the EV.

We adopt a comprehensive and rigorous notion of \textit{fair allocation} for conventional FCSs, building upon the fairness definition in~\cite{sinclair2020sequential}. The definition is formalized as follows:

\begin{definition}[Fair Allocation in Conventional FCSs]\label{def:fair_c} 
Consider a set of connected EVs $\mathcal{E}$ in a given time-slot, each with a utility function $u_{i}(\cdot)$. A power allocation $\boldsymbol{p} \in \mathcal{P}_{\mathcal{E}}$ is considered fair if it satisfies the following three criteria: 
\begin{enumerate}[left=0.4em]
    \item \textbf{Envy-Freeness:} No EV prefers the allocation of another over its own. Formally, for all $i, j \in \mathcal{E}$,
    \[
    u_i(p_i) \geq u_i(p_j).
    \]
    \item \textbf{Pareto Efficiency:} The allocation cannot be modified to improve one EV’s utility without reducing the utility of another. That is, if there exists an alternative allocation $\boldsymbol{p}' \in \mathcal{P}_{\mathcal{E}}$ such that $u_i(p'_i) > u_i(p_i)$ for some $i \in \mathcal{E}$, then there must exist some $j \in \mathcal{E}$ for which $u_j(p'_j) < u_j(p_j)$. 
    \item \textbf{Proportionality:} Each EV receives at least a proportionate share of the utility it could obtain from the charging station’s total capacity. Formally, for all $i \in \mathcal{E}$,
    \[
    u_i(p_i) \geq \frac{1}{\abs{\mathcal{E}}} \cdot u_i(p^{\text{CS}}).
    \]
\end{enumerate}
\end{definition}

We introduce a computationally efficient and practically deployable algorithmic routine that guarantees a fair allocation. We refer to this proposed power allocation policy as \textsc{Fair-Opap-C} (Fair Online Power Allocation Policy for Conventional FCSs), which is detailed in Alg. \ref{alg:opap-c}.

\newlength{\textfloatsepsave}
\setlength{\textfloatsepsave}{\textfloatsep}

\setlength{\textfloatsep}{2pt}
\begin{algorithm}
\SetAlgoLined
\LinesNumbered
\KwIn{$p^{\text{CS}}$, $\mathcal{E}$, $\{p^{\text{req}}_{i}\}_{i \in \mathcal{E}}$.}
\KwOut{$\boldsymbol{p} \in \mathcal{P}_{\mathcal{E}}$}
$C \leftarrow \min \left(p^{\text{CS}},\sum_{i \in \mathcal{E}} p^{\text{req}}_{i} \right)$;\\
$\mathcal{U} \leftarrow \mathcal{E}$;\\
\For{$i \in \mathcal{E} \text{ in ascending order of } p^{\text{req}}_{i}$}{
    $\omega \leftarrow \frac{C}{\abs{\mathcal{U}}}$;\\
    \uIf{$\omega \geq p^{\text{req}}_{i}$}{
        $p_i \leftarrow p^{\text{req}}_{i}$;\\
        $\mathcal{U} \leftarrow \mathcal{U} - \{i\}$;\\
        $C \leftarrow C - p^{\text{req}}_{i}$;\\
    }
    \Else{
        $p_i \leftarrow \omega, \ \forall i \in \mathcal{U}$;\\
        \textbf{break};
    }
}
\caption{\textsc{Fair-Opap-C}.}\label{alg:opap-c}
\end{algorithm}

The allocation procedure underlying \textsc{Fair-Opap-C} follows the structure of the classical progressive filling algorithm, well known in network resource allocation \cite{bertsekas1992, nace2009max}. Its contribution in this work lies in its application to online EV fast charging, where no prior charge curve knowledge is available.

The time complexity of \textsc{Fair-Opap-C} is dominated by the sorting operation (\textbf{L3}),  yielding a complexity of $\mathcal{O}(\abs{\mathcal{E}}\log \abs{\mathcal{E}})$, which grows near-linearly with the number of EVs. For the fairness guarantee, we present the following theorem:

\begin{theorem}\label{thm:c} \textsc{Fair-Opap-C} guarantees a fair allocation for conventional FCSs under the utility functions in \eqref{eq:utility_c}. \end{theorem}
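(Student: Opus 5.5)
First I will pin down the structure of the output. Sort EVs so that $p^{\text{req}}_{1}\le\dots\le p^{\text{req}}_{n}$, with $n=\abs{\mathcal{E}}$.

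\textbf{Step 1: water-filling form.} I will show by induction over the loop that the output is $p_i=\min(p^{\text{req}}_i,\lambda)$ for a water level $\lambda$ satisfying $\sum_i p_i = C_0:=\min(p^{\text{CS}},\sum_i p^{\text{req}}_i)$. At each iteration, $C/\abs{\mathcal{U}}$ is the equal share of the remaining capacity among the not-yet-served EVs. If the loop breaks at EV $k$, set $\lambda=\omega$. Then $\lambda<p^{\text{req}}_k\le p^{\text{req}}_j$ for every $j\in\mathcal{U}$, so all of $\mathcal{U}$ is capped at $\lambda$ and the remaining capacity is used exactly. Each earlier EV $j$ was served fully, with $p^{\text{req}}_j\le\omega_j$. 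I must also check that $\omega$ is nondecreasing over the loop. This holds because removing an EV whose request is at most the current share leaves an average of the remainder no smaller than before. Hence every served EV has $p^{\text{req}}_j\le\lambda$. If no break occurs, all EVs are fully served and I take $\lambda=\max_i p^{\text{req}}_i$.

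Feasibility then follows from three facts: $p_i\le p^{\text{req}}_i\le p^{\text{port}}$, $p_i\ge 0$, and $\sum_i p_i=C_0\le p^{\text{CS}}$.

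\textbf{Step 2: envy-freeness.} If $u_i(p_i)=1$, there is nothing to prove. Otherwise $p_i=\lambda<p^{\text{req}}_i$. Since $p_j\le\lambda$ for every $j$, we get $u_i(p_j)=\min(p_j/p^{\text{req}}_i,1)\le\lambda/p^{\text{req}}_i=u_i(p_i)$.

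\textbf{Step 3: Pareto efficiency.} Suppose $\boldsymbol p'\in\mathcal{P}_{\mathcal{E}}$ weakly improves every utility and strictly improves some $u_i$. A strict improvement requires $p_i<p^{\text{req}}_i$, which means $\sum_j p^{\text{req}}_j>C_0$ and therefore $C_0=p^{\text{CS}}$. Weak improvement for each $j$ requires $p'_j\ge p_j$, because $u_j$ is strictly increasing on $[0,p^{\text{req}}_j]$ and $p_j\le p^{\text{req}}_j$. Strict improvement for $i$ requires $p'_i>p_i$. Summing gives $\sum_j p'_j>p^{\text{CS}}$, which contradicts feasibility.

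\textbf{Step 4: proportionality.} This is the step needing the most care, since the bound must hold for EVs capped at $\lambda$. Fully served EVs have utility 1, so the inequality is trivial for them. For a capped EV $i$, we have $u_i(p_i)=\lambda/p^{\text{req}}_i$ and $u_i(p^{\text{CS}})=\min(p^{\text{CS}}/p^{\text{req}}_i,1)\le p^{\text{CS}}/p^{\text{req}}_i$. It therefore suffices to show $\lambda\ge p^{\text{CS}}/n$.

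A break occurs only if $C_0=p^{\text{CS}}$, because otherwise every EV would pass the test. Recall from Step 1 that $\omega$ never decreases over the loop. Its initial value is $C_0/n=p^{\text{CS}}/n$, so $\lambda\ge p^{\text{CS}}/n$, as required.

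The main obstacle is rigorously establishing the monotonicity of $\omega$ in Step 1. Explicitly, if $p^{\text{req}}_i\le C/m$, then $(C-p^{\text{req}}_i)/(m-1)\ge C/m$. This inequality underpins both the water-level characterization and proportionality.
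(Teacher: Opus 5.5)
Your proposal is correct and follows essentially the same route as the paper. The monotonicity of $\omega$ is the paper's key lemma, and your water-level form $p_i=\min(p^{\text{req}}_i,\lambda)$ with $\sum_i p_i=C_0$ is exactly the paper's claims (i)--(ii) in its Pareto argument; proportionality likewise comes from $\omega^*\ge p^{\text{CS}}/\abs{\mathcal{E}}$. The differences are small streamlinings: a two-case envy argument using $p_j\le\lambda$, a contrapositive Pareto argument, the direct bound $u_i(p^{\text{CS}})\le p^{\text{CS}}/p^{\text{req}}_i$ in place of the paper's inequality $u_i(x/a)\ge u_i(x)/a$, and an explicit feasibility check that the paper leaves implicit.
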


\begin{proof}
    For brevity, the proof is relegated to Appendix \ref{appendix:c_proof}.
\end{proof}

\section{Fair-Opap-M: Online Power Allocation Policy for Modular FCSs}\label{sec:fairopap_m}

In modular FCSs, the charging station is equipped with a total of $m^{\text{CS}}$ power modules, each with a rated power of $p^{\text{mdl}}$ kW\footnote{If the FCS operates under an exogenous, time-varying cap $p^{\text{CS}}_t$, replace $m^{\text{CS}}$ by the per-slot quantity $m^{\text{CS}}_t := \min\{\,m^{\text{CS}},\, \lfloor p^{\text{CS}}_t / p^{\text{mdl}} \rfloor \,\}$.}. 
Ports can be allocated up to $m^{\text{port}}$ modules, making the rated power of a port $p^{\text{port}} = m^{\text{port}} p^{\text{mdl}}$. 
The modular FCS operates under capacity constraint $m^{\text{CS}} < n^{\text{port}} m^{\text{port}}$.

Due to the discrete nature of power modules, this gives rise to a discrete resource allocation problem, where modules are treated as indivisible units. The set of feasible power (module) allocations is defined as:

\begin{equation}
\mathcal{M}_{\mathcal{E}} = \left\{ \boldsymbol{m} \in \mathbb{Z}^{\abs{\mathcal{E}}} \mid 0 \leq m_{i} \leq m^{\text{port}}, \forall i \in \mathcal{E}, \ \sum_{i\in\mathcal{E}}m_{i}\leq m^{\text{CS}} \right\}.   
\end{equation}

We adopt a utility function similar to that in \eqref{eq:utility_c}, now defined in terms of $m_i$ instead of $p_i$. For an EV $i$, it is given as:

\begin{equation}\label{eq:utility_m}
    u_{i}(m_{i}) = \min \left(\frac{m_{i}}{m^{\text{req}}_{i}}, 1\right),
\end{equation}

\noindent where $m^{\text{req}}_{i} := p^{\text{req}}_{i}/p^{\text{mdl}}$ represents the requested power expressed in terms of the number of modules. Note that $m^{\text{req}}_{i}$ is not necessarily an integer.

When defining fair allocation in modular FCSs, we introduce a relaxed version of envy-freeness, modifying the notion presented in Def.~\ref{def:fair_c}. This relaxation is necessary because an envy-free and Pareto-efficient allocation is not always feasible in discrete settings. For instance, when $m^{\text{CS}} = 3$ and there are two EVs each with $m^{\text{req}}_i = 2$, no Pareto-efficient allocation can satisfy the standard envy-freeness condition. To address this, we adopt the notion of envy-freeness up to one item (EF1) \cite{caragiannis2019unreasonable} and formalize the following definition:

\begin{definition}[Fair Allocation in Modular FCSs]\label{def:fair_m} 
Consider a set of connected EVs $\mathcal{E}$ in a given time-slot, each with a utility function $u_{i}(\cdot)$. A module allocation $\boldsymbol{m} \in \mathcal{M}_{\mathcal{E}}$ is considered fair if it satisfies the following three criteria: 
\begin{enumerate}[left=0.4em]
    \item \textbf{Envy-Freeness up to One Module:} No EV envies another after the hypothetical removal of one module from the latter's allocation. Formally, for all $i, j \in \mathcal{E}$,
    \[
    u_i(m_i) \geq u_i(m_j - 1).
    \]
    \item \textbf{Pareto Efficiency:} The allocation satisfies the efficiency criterion defined in Def.~\ref{def:fair_c}, applied to the discrete module space $\mathcal{M}_{\mathcal{E}}$.
    \item \textbf{Proportionality:} The allocation satisfies the proportionality criterion defined in Def.~\ref{def:fair_c}, applied to the station's total module capacity $m^{\text{CS}}$. Formally, for all $i \in \mathcal{E}$,
    \[
    u_i(m_i) \geq \frac{1}{\abs{\mathcal{E}}} \cdot u_i(m^{\text{CS}}).
    \]
\end{enumerate}
\end{definition}

We propose \textsc{Fair-Opap-M}, an online module allocation policy for modular FCSs that guarantees a fair allocation. To the best of our knowledge, this is the first algorithm to address online fair module allocation under the discrete combinatorial structure arising from indivisible power modules, with a formal fairness proof establishing EF1, Pareto efficiency, and proportionality in this setting. The algorithmic routine is described in Alg.~\ref{alg:opap-m}.

\begin{algorithm}
\SetAlgoLined
\LinesNumbered
\KwIn{$m^{\text{CS}}$, $\mathcal{E}$, $\{m^{\text{req}}_{i}\}_{i \in \mathcal{E}}$.}
\KwOut{$\boldsymbol{m} \in \mathcal{M}_{\mathcal{E}}$.}
$C \leftarrow \text{min}\left(m^{\text{CS}}, \sum_{i \in \mathcal{E}}\ceil{m^{\text{req}}_{i}}\right)$;\\
$\mathcal{U} \leftarrow \mathcal{E}$;\\
$m_{i} \leftarrow 0, \ \forall i \in \mathcal{E}$;\\
\While{$C > 0$}{
$\mathcal{U}_{\text{sorted}} \leftarrow$ $\textsc{SortingPolicy}(\mathcal{U})$;\\
\For{$i \in \mathcal{U}_{\text{sorted}}$}{
%\tcp{Start of an allocation round}
$m_{i} \leftarrow m_{i} + 1$;\\
\If{$m_{i} = \ceil{m^{\text{req}}_{i}}$}{$\mathcal{U} \leftarrow \mathcal{U} - \{i\}$;}
$C \leftarrow C - 1$;\\
\If{C = 0}{\textbf{break};}
%\tcp{End of an allocation round}
}}
\caption{\textsc{Fair-Opap-M}.}\label{alg:opap-m}
\end{algorithm}

In \textsc{Fair-Opap-M}, EVs start with zero allocation (\textbf{L3}), and the algorithm iteratively allocates modules through a sequence of \textit{allocation rounds} until all modules are allocated (\textbf{L4-16}). Each round begins with sorting unfulfilled EVs $\mathcal{U}$ based on a defined \textsc{SortingPolicy} (\textbf{L5}), which is discussed later on.

The time complexity of \textsc{Fair-Opap-M} is dominated by the sorting step (\textbf{L5}). Each round sorts $\mathcal{U}$ in $\mathcal{O}(\abs{\mathcal{E}} \log \abs{\mathcal{E}})$ and distributes up to $\abs{\mathcal{E}}$ units, so at most $\mathcal{O}(m^{\text{CS}} / \abs{\mathcal{E}})$ rounds occur. The overall complexity is therefore $\mathcal{O}(m^{\text{CS}} \log \abs{\mathcal{E}})$, which grows only logarithmically with the number of EVs. Before presenting the fairness guarantee, we state the following assumption on the charging station configuration.

%\footnote{In fact, sorting is needed only once—when $C < \abs{\mathcal{U}}$, which is the condition under which an allocation round would terminate early. With this optimization, the complexity reduces to $\mathcal{O}(m^{\text{CS}} + \abs{\mathcal{E}} \log \abs{\mathcal{E}})$. Both implementations yield similar runtime in practice.}.

\begin{assumption}[Minimum station capacity]\label{as:assumption} The modular FCS satisfies
\begin{equation}\label{eq:mcs_lb}
 m^{\text{CS}} \geq m^{\text{port}} + (n^{\text{port}} - 1). 
\end{equation}
\end{assumption}

Assumption \ref{as:assumption} requires that the total module count is sufficient to provide the maximum allocation to at least one port and at least one module to each remaining port. This is a mild condition satisfied by any practically deployed oversubscribed modular FCS: a station unable to provide even one module per connected port offers no meaningful service regardless of the allocation policy applied.

\begin{theorem}\label{thm:m} Under Assumption \ref{as:assumption}, \textsc{Fair-Opap-M} guarantees a fair allocation for modular FCSs under any \textsc{SortingPolicy} and the utility functions in \eqref{eq:utility_m}. \end{theorem}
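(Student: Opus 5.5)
The plan is to first pin down the structure of the final allocation $\boldsymbol{m}$ produced by Alg.~\ref{alg:opap-m}, and then check the three criteria of Def.~\ref{def:fair_m} one by one. None of the arguments will use the order of $\mathcal{U}_{\text{sorted}}$, which is why the statement holds for any \textsc{SortingPolicy}.

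\emph{Feasibility.} Every EV stops receiving modules once $m_i=\ceil{m^{\text{req}}_i}$. Since $p^{\text{req}}_i\le p^{\text{port}}$, we have $m^{\text{req}}_i\le m^{\text{port}}$, and because $m^{\text{port}}$ is an integer, $\ceil{m^{\text{req}}_i}\le m^{\text{port}}$. Exactly $C\le m^{\text{CS}}$ modules are handed out in total, so $\boldsymbol{m}\in\mathcal{M}_{\mathcal{E}}$.

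\emph{Round-robin invariant.} Each allocation round gives exactly one module to every EV in $\mathcal{U}$, except possibly the last round, which is truncated when $C$ hits zero. Call an EV \emph{unfulfilled} if it is still in $\mathcal{U}$ at termination. Such an EV took part in every round, so it received one module in each completed round. Any other EV $j$ received at most one module per round, completed or not. This gives the key invariant:
\[
m_j\le m_i+1 \quad \text{for all } j\in\mathcal{E} \text{ and every unfulfilled } i.
\]
Fulfilled EVs have $m_i=\ceil{m^{\text{req}}_i}$ and therefore $u_i(m_i)=1$.

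\emph{EF1.} If $i$ is fulfilled, then $u_i(m_i)=1\ge u_i(m_j-1)$ trivially. Otherwise the invariant gives $m_j-1\le m_i$, and monotonicity of $u_i$ yields $u_i(m_i)\ge u_i(m_j-1)$.

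\emph{Pareto efficiency.} If $C=\sum_i\ceil{m^{\text{req}}_i}$, every EV is fulfilled, all utilities equal $1$, and the allocation is trivially efficient. Otherwise $C=m^{\text{CS}}$, so all modules are used. Suppose $\boldsymbol{m}'\in\mathcal{M}_{\mathcal{E}}$ weakly improves every EV and strictly improves some $i$. We will show that this forces $m'_j\ge m_j$ for all $j$, with strict inequality for $i$, by two cases.
\begin{itemize}
\item If $m_j<\ceil{m^{\text{req}}_j}$, then $u_j$ is strictly increasing on integers up to $m_j$, so $u_j(m'_j)\ge u_j(m_j)$ gives $m'_j\ge m_j$.
\item If $m_j=\ceil{m^{\text{req}}_j}$, then $u_j(m'_j)\ge 1$ requires $m'_j\ge m^{\text{req}}_j$. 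Since $m'_j$ is an integer, $m'_j\ge\ceil{m^{\text{req}}_j}=m_j$.
\end{itemize}
A strict improvement for $i$ is only possible if $m'_i>m_i$. Then $\sum_j m'_j>m^{\text{CS}}$, contradicting feasibility of $\boldsymbol{m}'$.

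\emph{Proportionality.} This is the step that uses Assumption~\ref{as:assumption}, and I expect it to be the main obstacle, since it is where the discreteness loses a little. Because $m^{\text{CS}}\ge m^{\text{port}}\ge m^{\text{req}}_i$, we have $u_i(m^{\text{CS}})=1$. So we must show $m_i\ge m^{\text{req}}_i/\abs{\mathcal{E}}$ for every unfulfilled $i$; fulfilled EVs are immediate. If some EV is unfulfilled, all modules were used. The invariant then bounds the other $\abs{\mathcal{E}}-1$ EVs by $m_i+1$ each:
\[
m^{\text{CS}}=\sum_j m_j\le m_i+(\abs{\mathcal{E}}-1)(m_i+1)=\abs{\mathcal{E}}m_i+\abs{\mathcal{E}}-1 .
\]
Combining this with $m^{\text{CS}}\ge m^{\text{port}}+n^{\text{port}}-1\ge m^{\text{port}}+\abs{\mathcal{E}}-1$, which holds because $\abs{\mathcal{E}}\le n^{\text{port}}$, gives
\[
\abs{\mathcal{E}}\,m_i\ge m^{\text{port}}\ge m^{\text{req}}_i,
\]
which is exactly the required bound. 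The delicate point is to bound the $\abs{\mathcal{E}}-1$ other EVs rather than all $\abs{\mathcal{E}}$; the cruder bound $\abs{\mathcal{E}}(m_i+1)$ would leave an off-by-one gap that the assumption would not close.
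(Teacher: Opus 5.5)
Your proof is correct and follows the same skeleton as the paper's. Your invariant $m_j \le m_i + 1$ for every unfulfilled $i$ packages Lemma~\ref{lem:2}, Corollary~\ref{cor:1} and case (ii) of the EF1 check into one statement, and it drives the EF1, Pareto-efficiency and proportionality checks as in the paper (your Pareto argument, that weak improvement forces $\boldsymbol{m}'\ge\boldsymbol{m}$ componentwise, is just the contrapositive of the paper's ``some $k$ must lose'' argument). The main divergence is in proportionality: you sum the invariant over the other $\abs{\mathcal{E}}-1$ EVs to get $\abs{\mathcal{E}}\,m_i \ge m^{\text{CS}}-(\abs{\mathcal{E}}-1)$ directly, whereas the paper first proves $m_i \ge \lfloor m^{\text{CS}}/\abs{\mathcal{E}}\rfloor$ by counting rounds (Lemma~\ref{lem:3}) and then applies the floor bound; these bounds are equivalent for integer $m_i$, yours simply reuses the EF1 invariant and avoids the floor, and you also verify feasibility $\boldsymbol{m}\in\mathcal{M}_{\mathcal{E}}$, which the paper leaves implicit.
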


\begin{proof}
    For brevity, the proof is relegated to Appendix \ref{appendix:m_proof}.
\end{proof}

The core idea behind the implemented \textsc{SortingPolicy} is to prioritize EVs for module allocation based on how much utility they gain from receiving the next module. This ensures that each unit of limited capacity is allocated where it yields the most immediate benefit, in line with the fairness objective. In particular, \textsc{SortingPolicy} sorts the set of unfulfilled EVs $\mathcal{U}$ in descending order of \textit{marginal utility gain}.

Since all EVs in $\mathcal{U}$ have the same current allocation level $m \in \mathbb{Z}$, and satisfy $m < m^{\text{req}}_i$, the marginal utility gain from assigning one additional module is computed as:

\begin{equation}
\Delta u_i = \min \left(\frac{m+1}{m^{\text{req}}_{i}}, 1\right) - \frac{m}{m^{\text{req}}_{i}}.
\end{equation}

This value is used as the primary sorting key. In cases where multiple EVs have identical marginal utility gains, ties are broken based on ascending order of the current SoCs.

\section{Evaluation Setup and Case Study Design}\label{sec:eval}

Here we present the experimental setup used to evaluate the proposed power allocation policies. We describe the benchmark methods, performance metrics, and simulation settings used in the case studies.

\subsection{Benchmark Methods}

We compare \textsc{Fair-Opap-C} and \textsc{Fair-Opap-M} against four heuristic power allocation policies from the literature. The workflows of these algorithms are summarized below.

\subsubsection{ES (Equal Share) \cite{kay1988fair, chen2021smoothed}}  

Each EV receives an equal share of the total station capacity ($p^{\text{CS}}$ or $m^{\text{CS}}$), capped by its power request. In the conventional setting, the power allocation is given by $p_i \leftarrow \min(\frac{p^{\text{CS}}}{|\mathcal{E}|},\ p_i^{\text{req}})$, and in the modular setting, it is given by $m_i \leftarrow \min( \floor{\frac{m^{\text{CS}}}{\abs{\mathcal{E}}}},\ \ceil{m_i^{\text{req}} }).$

\subsubsection{REP (Remaining Energy Proportional) \cite{stankovic1998deadline, chen2021smoothed}}
Each EV is allocated a portion of the total capacity proportional to its remaining energy to full charge, capped by its request. Let
$
\alpha_i = \frac{\Delta E_i}{\sum_{j \in \mathcal{E}} \Delta E_j},
$
where $\Delta E_i$ is the energy remaining for EV $i$. In the conventional setting, the power allocation is given by:
$
p_i \leftarrow \min(\alpha_i \cdot p^{\text{CS}},\ p_i^{\text{req}}).
$
For the modular setting, EVs are sorted in ascending order of $\alpha_i$, and then sequentially allocated as
$
m_i \leftarrow \min\left(\left\lceil \alpha_i \cdot m^{\text{CS}} \right\rceil,\ \left\lceil m_i^{\text{req}} \right\rceil\right),
$ until either all demand is fulfilled or capacity is exhausted.

\subsubsection{CC (Combined Charging) \cite{limmer2024combination}}
This policy combines five strategies: Equal Distribution (ED), First-Come-First-Served (FCFS), Less Energy First (LEF), Lower SoC First (LSF), and Less Charged First (LCF). The total capacity is divided into five fractions, each applied to one strategy, and the final allocation is the sum across all five. In our implementation, equal fractions are used for the conventional setting. In the modular setting, the ED share is set to $|\mathcal{E}| / m^{\text{CS}}$, and the remaining capacity is distributed to the other four strategies in round-robin fashion.

\subsubsection{FCFS-SMX (First-Come-First-Served with Supply Minimum) \cite{buckreus2021optimization, lin2023surrogate}}   
Each EV is initially allocated $X\%$ of the equal share of total capacity (with $X = 50$). In the conventional setting, this initial allocation is given by:
$
p_i \leftarrow \min( \frac{X}{100} \cdot \frac{p^{\text{CS}}}{|\mathcal{E}|},\ p_i^{\text{req}})
$, and in the modular setting, it is $m_i \leftarrow \min( \floor{\frac{X}{100} \cdot \frac{m^{\text{CS}}}{|\mathcal{E}|}},\ \ceil{m_i^{\text{req}}})$. Following this, the remaining capacity is allocated on a first-come-first-served basis.

We also consider three welfare-maximizing allocation policies, widely used in fair division contexts. The allocations are determined by solving optimization problems tailored to their objectives. These policies are as follows:

\subsubsection{MUW (Maximum Utilitarian Welfare)}  
The feasible allocation, i.e., $\boldsymbol{p} \in \mathcal{P}_{\mathcal{E}}$ or $\boldsymbol{m} \in \mathcal{M}_{\mathcal{E}}$, that maximizes the total utility $\sum_{i \in \mathcal{E}} u_i(\cdot)$.
This corresponds to a linear program (LP) in the conventional, and an integer linear program (ILP) in the modular setting.

\subsubsection{MEW (Maximum Egalitarian Welfare)}  
The feasible allocation that maximizes the minimum utility across all EVs $\min_{i \in \mathcal{E}} u_i(\cdot)$.
This is formulated as a LP for the conventional, and an ILP for the modular setting.

\subsubsection{MNW (Maximum Nash Welfare)}  
The feasible allocation that maximizes the product of individual utilities $\prod_{i \in \mathcal{E}} u_i(\cdot)$, or equivalently, the sum of log-utilities $\sum_{i \in \mathcal{E}} \log u_i(\cdot)$. Unlike MUW and MEW, the objective function here is non-linear (though concave), resulting in a non-linear program (NLP) for the conventional, and an integer NLP for the modular setting.

\subsection{Performance Metrics}

We assess each power allocation policy using metrics designed to capture the three core fairness axioms: \textit{envy-freeness}, \textit{efficiency}, and \textit{proportionality}. The metrics are computed at each time-slot $t \in \mathcal{H}$ based on the allocations generated by each policy.

In a conventional FCS, let $p^{\textsc{Pol}}_{i,t}$ denote the power allocated to EV $i$ at time-slot $t$ under policy \textsc{Pol}. The envy experienced by EV $i$ towards EV $j$, induced by their respective allocations at $t$, is defined as:

\begin{equation}
E^{\textsc{Pol}}(i,j,t) = \max\left(0,\ u_i(p^{\textsc{Pol}}_{j,t}) - u_i(p^{\textsc{Pol}}_{i,t})\right).
\end{equation}

Using this, we define the \textit{envy-freeness score} of policy \textsc{Pol} at time-slot  $t$ as follows:

\begin{equation}\label{eq:ef}
\text{Envy-Freeness}_t^{\textsc{Pol}} = 1 - \max_{i,j \in \mathcal{E}^{\textsc{Pol}}_t} E^{\textsc{Pol}}(i,j,t),
\end{equation}

\noindent where $\mathcal{E}^{\textsc{Pol}}_t$ is the set of connected EVs at $t$. For modular FCSs, we evaluate fairness using the EF1 criterion. The \textit{envy1-freeness score} under policy \textsc{Pol} at time-slot $t$ is computed as:

\begin{equation}\label{eq:ef1}
    \text{Envy1-Freeness}_t^{\textsc{Pol}} = 1 - \underset{i,j \in \mathcal{E}_{t}^{\textsc{Pol}}}{\text{max}} E_{1}^{\textsc{Pol}}(i,j,t),
\end{equation}

\noindent where $E_{1}^{\textsc{Pol}}(i,j,t) = \text{max} ( 0, u_i(m^{\textsc{Pol}}_{j,t}-1)-u_i(m^{\textsc{Pol}}_{i,t}) )$, with $m^{\textsc{Pol}}_{i,t}$ denoting the number of modules allocated to EV $i$ at time-slot $t$ under policy \textsc{Pol}.

We next evaluate the efficiency of each policy, which reflects how effectively the available station capacity is utilized. The \textit{efficiency score} quantifies the fraction of usable power that is successfully allocated to EVs at each time-slot. For both conventional and modular FCSs, we compute the efficiency score of policy \textsc{Pol} at time-slot $t$ as:

\begin{equation}\label{eq:efficiency}
    \text{Efficiency}_{t}^{\textsc{Pol}} = \frac{\sum_{i\in\mathcal{E}_{t}^{\textsc{Pol}}}p^{\textsc{Pol}}_{i,t}}{\text{min}\left ( p^{\text{CS}}, \sum_{i\in\mathcal{E}_{t}^{\textsc{Pol}}}p^{\text{req}}_{i,t} \right )}.
\end{equation}

Here, note that $p^{\textsc{Pol}}_{i,t} = m^{\textsc{Pol}}_{i,t}p^{\text{mdl}}$ and $p^{\text{req}}_{i,t} = m^{\text{req}}_{i,t}p^{\text{mdl}}$ for modular FCSs. We also compute the \textit{utility score} of each EV at each time-slot as a measure of proportionality:

\begin{equation}\label{eq:utility}
     \text{Utility}_{i,t}^{\textsc{Pol}} = \text{min} \left ( \frac{p^{\textsc{Pol}}_{i,t}}{p^{\text{req}}_{i,t}}, 1\right),
\end{equation}

\noindent which corresponds to the utilities defined in \eqref{eq:utility_c} and \eqref{eq:utility_m}, and captures the extent to which each EV’s demand is fulfilled.

All scores range from 0 and 1, with higher values indicating better performance in equity, efficiency, and service quality, together reflecting the overall fairness of the power allocation policy. For each policy, we report both the minimum and mean values of each score to capture the worst-case and average performance across the simulated EV arrival scenarios.

Beyond instantaneous metrics, which evaluate fairness based on allocations at a single time-slot, we also assess the fairness of \textit{power profiles}, referring to the sequence of power allocations an EV receives throughout its charging session. To this end, we define a pairwise comparison metric called \textit{SoCGain}, which quantifies the potential SoC improvement an EV $i$ could have gained if it had received the power allocation of another EV $j$. The $\text{SoCGain}^{\textsc{Pol}}(i,j,\tau)$ is computed by simulating the SoC trajectory of EV $i$ under EV $j$'s power profile for a duration of $\tau$ time-slots starting from its connection time, taking into account the non-linear charging behavior defined by $\Psi_i$, and comparing it to EV $i$'s actual SoC trajectory under the given policy. Note that SoCGain is an \textit{oracle} metric that assumes access to the true charge curve of the EV. It is used solely for post hoc evaluation, not during policy execution. %The algorithm for computing SoCGain is summarized in Alg.~\ref{alg:socgain}.

Using the SoCGain metric, we define a policy-level fairness score called \textit{SoC envy-freeness}. This score is defined as:

\begin{equation}\label{eq:socenvyf}
\text{SoCEnvy-Freeness}^{\textsc{Pol}}_{\tau} = 1 - \underset{i,j \in \mathcal{E}_{H}}{\text{max}}\text{SoCGain}^{\textsc{Pol}}(i,j,\tau),
\end{equation}

\noindent where $\mathcal{E}_{H}$ is the set of all EVs within the simulated time horizon.
The score ranges from 0 to 1, with higher values indicating better fairness, as they reflect minimal potential improvement from swapping EV power profiles. We compute this metric under different values of $\tau \in \{0, 15, 30, \dots, 90\}$ minutes to analyze both short-term and long-term fairness properties of each power allocation policy.

\begin{figure*}
    \centering
    \begin{minipage}{0.88\textwidth}
        \centering       \includegraphics[width=0.88\textwidth]{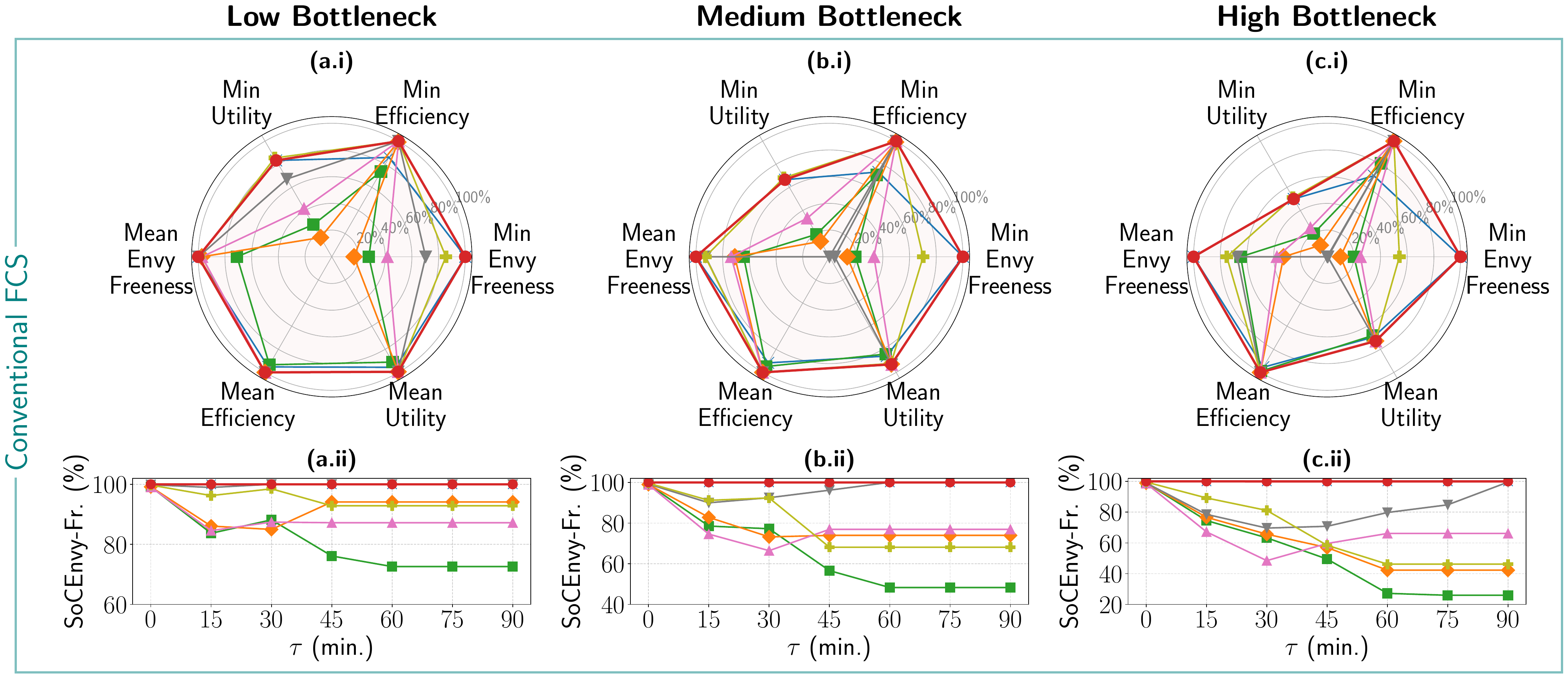}
    \end{minipage}%
    \vspace{0.15em} % Add some space between the two rows
    \centering
    \begin{minipage}{0.88\textwidth}
        \centering    \includegraphics[width=0.88\textwidth]{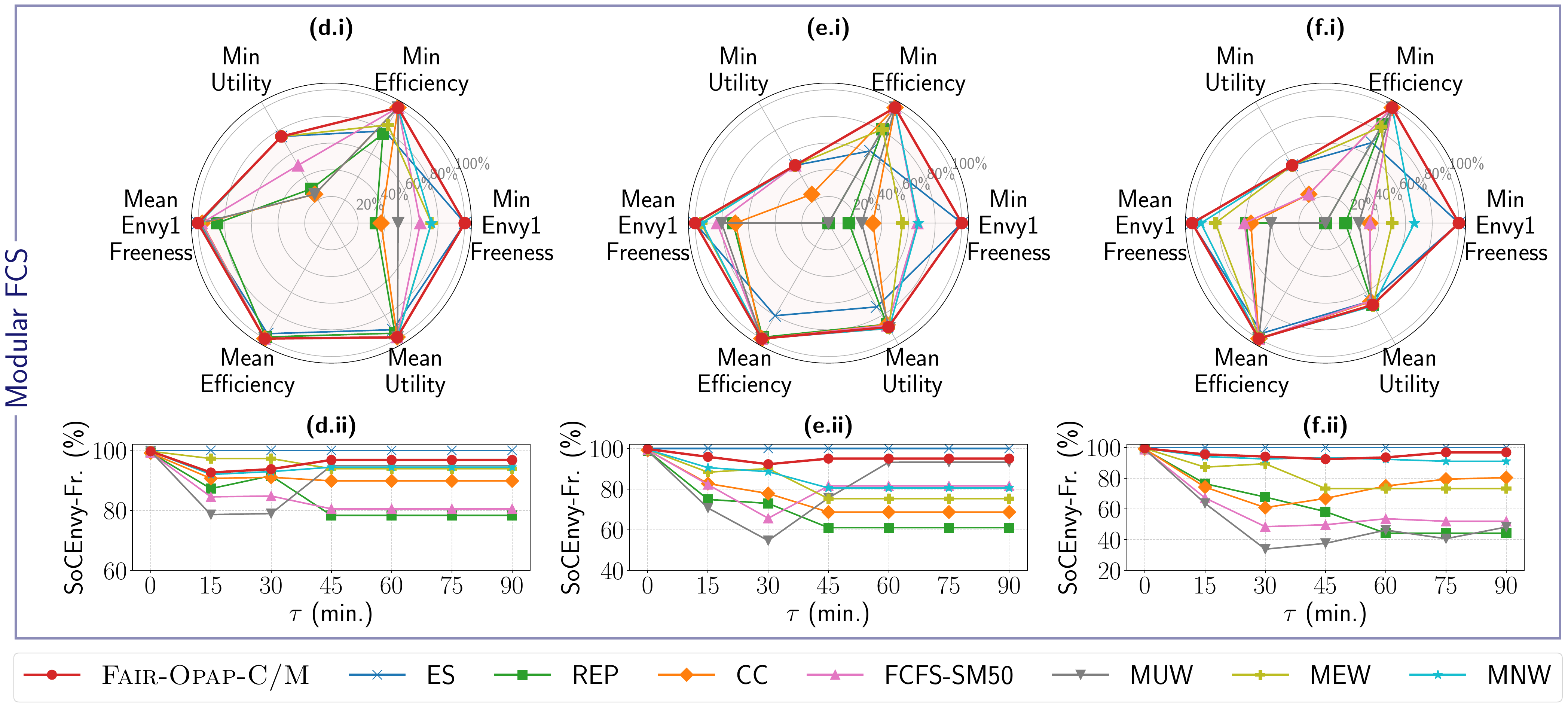}
    \end{minipage}%  
\caption{Comprehensive performance comparison of all evaluated power allocation policies across both conventional (top panel) and modular (bottom panel) FCS architectures under three bottleneck scenarios: low (left), medium (center), and high (right). The radar plots in subfigures (a.i)--(c.i) and (d.i)--(f.i) report the minimum and mean values of fairness (envy- or envy1-freeness), efficiency, and utility scores, as defined in \eqref{eq:ef}--\eqref{eq:utility}. The corresponding line plots in (a.ii)--(c.ii) and (d.ii)--(f.ii) depict the SoC envy-freeness score defined in \eqref{eq:socenvyf}, over increasing evaluation window lengths $\tau$.}\vspace{-0.5cm}
\label{fig:results}
\end{figure*}

\subsection{Case Study Setting}

We assess the proposed power allocation policies by simulating a high-traffic FCS under practical operational conditions. The case study includes both FCS architectures, each configured with $n^\text{port}=6$ and $p^{\text{port}}=100$ kW. For the modular FCS, the module rating is set to $p^{\text{mdl}}=25$ kW. Three station-level power cap scenarios are considered for each architecture: $p^{\text{CS}} \in \{500, 400, 300\}$ kW, which we refer to as the low, medium, and high bottleneck settings, respectively.

An event-driven process simulates 300 sequential EV arrivals. Upon each session completion and disconnection, the next EV arrives after a 3-minute buffer to reflect plug-in delays. Consequently, arrival and departure times are policy-dependent, as charging durations vary with the induced power profiles. Initial SoCs are drawn uniformly from [8\%, 20\%], and EVs depart immediately upon reaching 90\% SoC. EV types are drawn uniformly from five models (Fig. \ref{fig:chage_curves}) characterized by specific charge curves and battery capacities.

Beyond these baseline scenarios, we conduct two targeted studies. First, we evaluate responsiveness to dynamic exogenous power caps to assess how policies adapt to time-varying limits $p^{\text{CS}}_t$. We simulate 50 EV arrivals under the medium-bottleneck setting, subjecting the FCS to an immediate 50\% power curtailment followed by a gradual recovery. Second, we validate scalability by comparing the computation time of \textsc{Fair-Opap} against three welfare-maximizing, optimization-based policies: MUW, MEW, and MNW. We vary $n^{\text{port}}$ from 1 to 300 and set the station-level power cap as $p^{\text{CS}}=0.4 \cdot n^{\text{port}} \cdot p^{\text{port}}$, with the number of simultaneously connected EVs equal to $n^{\text{port}}$. For each value of $n^{\text{port}}$, each policy is run for a single time instance, and the corresponding computation time is recorded.

All experiments are conducted in MATLAB on a laptop equipped with an Intel Core i7-10750H CPU (2.60 GHz) and 16 GB of RAM. Optimization problems associated with MUW, MEW, and MNW are solved using MOSEK 11.0 \cite{mosek}.

\section{Results and Discussion}\label{sec:results}

\subsection{High-Traffic FCS Simulation: Conventional Architecture}

The envy-freeness scores depicted in Figs.~\ref{fig:results}(a.i)--(c.i) for conventional FCS show that ES, MNW, and \textsc{Fair-Opap-C} achieve perfect values across all bottleneck scenarios. For ES, this is expected as equal sharing inherently prevents envy, though at the expense of efficiency and utility; ES yields the lowest mean scores for both metrics under high bottlenecks. Notably, MNW produces allocations identical to \textsc{Fair-Opap-C}. This is consistent with results from fair division theory, where the Nash welfare solution is known to yield envy-free, proportional, and Pareto-efficient allocations \cite{sinclair2020sequential}.

While most policies except ES and REP achieve perfect efficiency, utility-based metrics expose clearer differences. MEW provides the highest minimum utility, consistent with its objective to maximize minimum satisfaction, yet the gap between MEW and \textsc{Fair-Opap-C} remains small at 2.15\% and narrows under tighter bottlenecks. Similarly, MUW delivers slightly higher mean utility, but the difference relative to \textsc{Fair-Opap-C} is negligible, staying under 0.25\%.

The SoC envy-freeness results in Figs.~\ref{fig:results}(a.ii)--(c.ii) confirm that \textsc{Fair-Opap-C} maintains a perfect 100\% score across all evaluation windows and bottlenecks.
This matches the performance of MNW, which, as previously noted, yields allocations identical to those of \textsc{Fair-Opap-C}.
ES also maintains a perfect score due to its equal-split behavior; however, as discussed earlier, its overall effectiveness is diminished by lower efficiency and utility scores. Conversely, while MUW and MEW perform well under low bottlenecks, their scores drop significantly to approximately 70\% and 46\%, respectively, as bottlenecks increase. Other heuristics show high variability, reflecting an inability to ensure sustained fairness.

Overall, \textsc{Fair-Opap-C} provides the most balanced and consistent performance across all metrics and bottleneck levels. It not only satisfies the theoretical fairness guarantees instantaneously, but also preserves fairness across entire charging sessions. These empirical results validate the algorithm’s fairness properties, as formally established in Section~\ref{sec:fairopap_c}.

\subsection{High-Traffic FCS Simulation: Modular Architecture}

Outcomes for the modular architecture in Figs.~\ref{fig:results}(d.i)--(f.i) highlight the strong performance of \textsc{Fair-Opap-M}, which achieves perfect scores in envy1-freeness and efficiency across all bottleneck scenarios. ES also maintains perfect envy1-freeness; however, its efficiency and utility degrade under tighter constraints, falling below 70\% in high bottleneck.

The MNW policy performs comparably to \textsc{Fair-Opap-M} in utility and efficiency, maintaining mean envy1-freeness above 93\%. However, unlike in the conventional setting, MNW does not yield identical allocations, and its minimum envy1-freeness drops below 67\% under medium and high bottlenecks. This aligns with theoretical findings that maximum Nash welfare does not guarantee EF1 under indivisible resources and non-additive valuations~\cite{caragiannis2019unreasonable}. While MUW, MEW, and MNW achieve slightly higher mean utility than \textsc{Fair-Opap-M} in all bottleneck cases, the differences are minor, with a maximum gap of 1.14\%. Given that these welfare-maximizing methods exhibit envy1-freeness as low as 25\% and mean values below 40\% in certain scenarios, \textsc{Fair-Opap-M} offers a more balanced trade-off between fairness and utility.

The SoC envy-freeness results in Figs.~\ref{fig:results}(d.ii)--(f.ii) provide further insights. ES achieves perfect scores due to its uniform allocation strategy, but its low utility and efficiency diminish its performance. \textsc{Fair-Opap-M} maintains consistently high scores across all bottleneck scenarios and evaluation windows, above 92\% throughout. It is the highest performer after ES in all but two instances, where MEW slightly outperforms it. While MEW and MNW perform similarly to \textsc{Fair-Opap-M} in low bottleneck settings, their scores drop significantly under increased bottleneck severity, with MEW falling to 70\% and MNW to 80\%, particularly at longer evaluation windows.

Overall, \textsc{Fair-Opap-M} provides the most balanced and consistent performance across all criteria in the modular setting. It satisfies all fairness criteria at each time step while ensuring temporal fairness, as evidenced by high SoC envy-freeness scores. These results affirm that \textsc{Fair-Opap-M} successfully extends theoretical fairness guarantees to modular architectures.

\begin{figure}
    \centering
    \includegraphics[width=0.95\columnwidth]{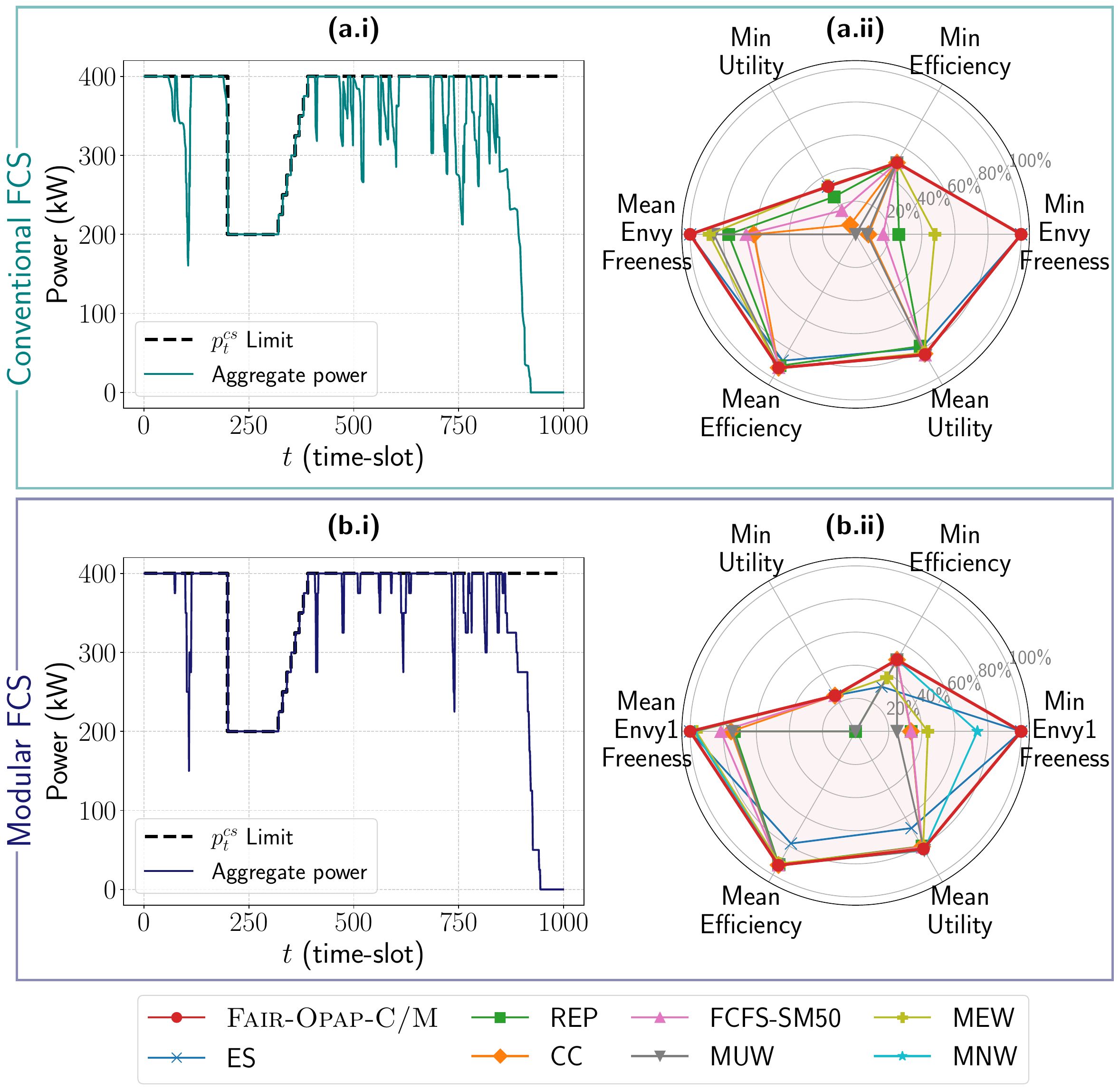}
    \caption{Responsiveness to dynamic exogenous power caps of \textsc{Fair-Opap-C/M} in (a) conventional and (b) modular FCSs. (i) Aggregate power tracking the time-varying capacity limit $p^{\text{CS}}_t$. (ii) Radar charts comparing fairness and efficiency metrics across all policies.}
    \label{fig:time_varying}\vspace{0.1cm}
\end{figure}

\subsection{Responsiveness to Dynamic Exogenous Power Caps}

A key operational scenario for capacity-constrained FCSs is participation 
in grid-level programs such as demand response or interruptible load, where 
an external operator or aggregator imposes a time-varying power cap 
$p^{\text{CS}}_t$ on the station. For such participation to be viable, the 
station requires an internal mechanism that instantly translates each cap 
command into a feasible and fair allocation across connected EVs. This case 
study evaluates whether \textsc{Fair-Opap-C/M} can serve as this execution 
layer.

Figs.~\ref{fig:time_varying}(a.i) and (b.i) show that the aggregate power 
delivered tracks $p^{\text{CS}}_t$ immediately and accurately across both 
architectures, including the abrupt 50\% curtailment and the subsequent 
gradual recovery. This responsiveness is a direct consequence of the online 
design: since allocations are recomputed at each time-slot using only the 
current cap and instantaneous power requests, any change in $p^{\text{CS}}_t$ 
is reflected in the very next allocation with no additional cap-awareness 
mechanism required. The modular architecture exhibits discrete steps in the 
aggregate profile due to integer module constraints, but overall tracking 
remains tight throughout.

Critically, this cap responsiveness is achieved without sacrificing fairness. 
The radar charts in Figs.~\ref{fig:time_varying}(a.ii) and (b.ii) show that 
\textsc{Fair-Opap-C/M} maintains a strong and balanced performance profile 
across all fairness and efficiency metrics under the dynamic cap scenario, 
consistent with the results observed in the static bottleneck settings. These results validate 
\textsc{Fair-Opap-C/M} as a practical execution layer for grid-integrated 
fast charging, capable of acting as a fast-responding and internally fair 
controllable load.

\subsection{Computational Time and Implementation Aspects}

Throughout the evaluations, \textsc{Fair-Opap-C/M} demonstrated comparable performance to welfare-maximizing policies, with \textsc{Fair-Opap-C} and MNW producing identical allocations in the conventional setting. Motivated by these observations, we compare these methods in terms of computational efficiency and scalability, and discuss practical implementation aspects.

The computational time results in Fig.~\ref{fig:runtime} report the runtime of \textsc{Fair-Opap-C/M}, MUW, MEW, and MNW as a function of simultaneously connected EVs for both FCS architectures. For the welfare-maximizing policies, the associated optimization problems were solved with a 10-second solver time limit. In the conventional FCS case, \textsc{Fair-Opap-C} exhibits the lowest runtime, remaining below 1 ms even with 300 EVs. In contrast, the runtimes of the optimization-based baselines increase steadily with problem size. Although MUW and MEW maintain sub-second runtimes up to 300 EVs, MNW reaches over 15 ms, indicating a higher computational burden.

The gap widens significantly in the modular case. \textsc{Fair-Opap-M} maintains nearly flat runtimes, staying consistently below 1 ms. Conversely, MEW and MNW experience sharp runtime increases due to the integer nature of the problem. Beyond 200 EVs, MEW terminates suboptimally upon reaching the time limit, while MNW fails more severely due to its non-linear nature, terminating suboptimally after 20 EVs and often failing to find feasible solutions beyond 100 EVs.

\begin{figure}
    \centering
    \includegraphics[width=0.90\columnwidth]{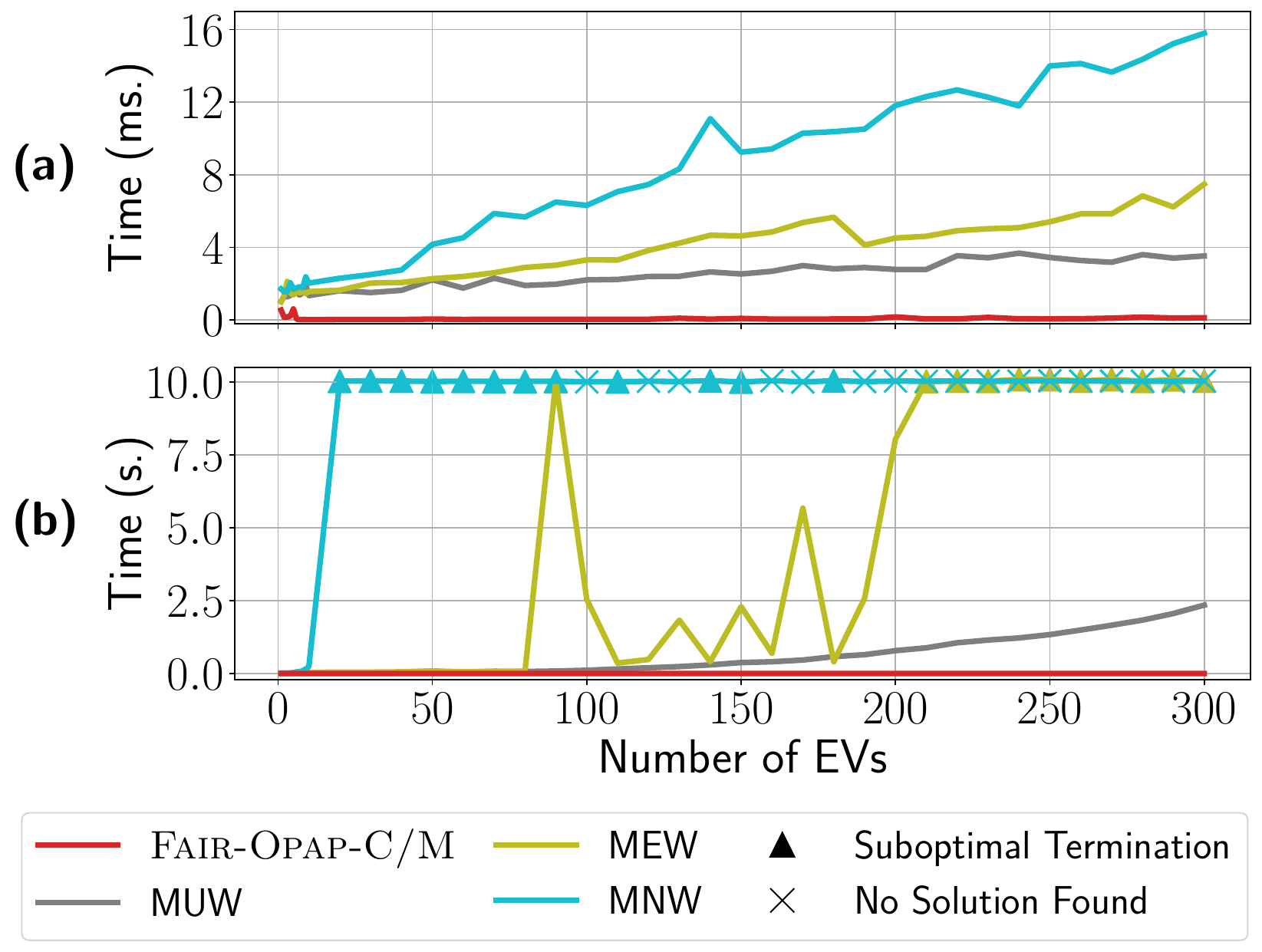}
    \caption{Computational time as a function of the number of connected EVs: (a) Conventional FCS, (b) Modular FCS. Note the different time scales.}
    \label{fig:runtime}\vspace{0.1cm}
\end{figure}

While the computational advantage of \textsc{Fair-Opap-M} is clear in the modular setting, the advantage in the conventional setting is more modest. However, the implementation simplicity of \textsc{Fair-Opap-C/M} remains a key strength. Unlike optimization benchmarks requiring specialized solvers and high-performance computing, the proposed \textsc{Fair-Opap-C/M} rely on basic operations like sorting, comparisons, and addition. This lightweight design is ideal for edge devices with limited memory and processing capacity. Ultimately, these algorithms offer strong fairness guarantees, hardware efficiency, and ease of integration critical for scalable, decentralized charging infrastructure.

\section{Conclusion}\label{sec:conclusion}

This paper introduces \textsc{Fair-Opap-C} and \textsc{Fair-Opap-M}, two online algorithms that provide computationally efficient and provably fair power allocation for conventional and modular EV fast charging stations. Drawing from fair division theory, these policies guarantee fairness without prior knowledge of EV charge curves, and extensive simulations confirm their superior performance over existing benchmarks. 

Beyond providing a fair and efficient charging station-level local control strategy, the proposed algorithms serve as an essential building block for integrating fast charging infrastructure into broader smart grid systems. For an FCS to effectively participate in grid-level programs such as interruptible load or demand response, it requires an internal mechanism to translate a single power cap command from a grid operator or an aggregator into an efficient allocation across connected EVs. The \textsc{Fair-Opap} policies provide this crucial execution layer. Furthermore, the proven computational efficiency of the algorithms is a key enabler for high-frequency ancillary services, allowing an FCS operator or an EV aggregator to act as a fast-responding, controllable load.

Building on this foundation, future work will focus on the explicit integration of these allocation policies into grid-aware coordination frameworks and exploring their direct interactions with market-based signals.

\bibliography{bibliog}

\appendix

\subsection{Proof of Theorem \ref{thm:c}}\label{appendix:c_proof}

We prove the theorem by verifying that the allocation produced by \textsc{Fair-Opap-C} (Alg.~\ref{alg:opap-c}) satisfies the three fairness criteria: envy-freeness, Pareto efficiency, and proportionality.

We first handle the trivial case. If the algorithm terminates with $\mathcal{U}=\emptyset$, then all EVs are \textit{fulfilled}, i.e., $p_i=p_i^{\text{req}}$ for every $i\in\mathcal{E}$. Hence all utilities are equal to 1, and envy-freeness, Pareto efficiency, and proportionality hold immediately. Therefore, in the rest of the proof, we consider only the non-trivial case $\mathcal{U}\neq\emptyset$, where the algorithm exits through the final else branch. Let $\omega^*$ denote the value of $\omega$ used in that branch, let $\mathcal{F}:=\mathcal{E}\setminus\mathcal{U}$ be the set of fulfilled EVs at termination, and define $C_0:=\min\bigl(p^{\text{CS}},\sum_{i\in\mathcal{E}}p_i^{\text{req}}\bigr)$.

\begin{lemma}[Monotonicity of $\omega$]\label{lem:lemma}
The value of $\omega$ is non-decreasing across the for-loop iterations of \textsc{Fair-Opap-C}. Consequently, $\omega^* \geq p_i^{\text{req}}$ for every $i \in \mathcal{F}$.
\end{lemma}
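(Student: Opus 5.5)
The plan is to compare consecutive iterations of the for-loop directly. Index the iterations by $k=1,2,\dots$ in the order the EVs are visited, i.e., ascending $p^{\text{req}}$. Write $i_k$ for the EV visited at iteration $k$, $C_k$ and $\mathcal{U}_k$ for the values of $C$ and $\mathcal{U}$ at the start of that iteration, and $\omega_k := C_k/\abs{\mathcal{U}_k}$. An iteration $k$ that does not terminate must pass through the if-branch. Hence $\omega_k \geq p^{\text{req}}_{i_k}$, and the updates are $C_{k+1} = C_k - p^{\text{req}}_{i_k}$ and $\abs{\mathcal{U}_{k+1}} = \abs{\mathcal{U}_k} - 1$.

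The key step is the averaging identity. If $\abs{\mathcal{U}_{k+1}} \geq 1$, then
\begin{equation*}
\omega_{k+1} - \omega_k = \frac{C_k - p^{\text{req}}_{i_k}}{\abs{\mathcal{U}_k}-1} - \frac{C_k}{\abs{\mathcal{U}_k}} = \frac{C_k - \abs{\mathcal{U}_k}\, p^{\text{req}}_{i_k}}{\abs{\mathcal{U}_k}\bigl(\abs{\mathcal{U}_k}-1\bigr)} = \frac{\omega_k - p^{\text{req}}_{i_k}}{\abs{\mathcal{U}_k}-1} \geq 0 .
\end{equation*}
In words, removing an element no larger than the current average cannot decrease the average of what remains. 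Induction over $k$ then gives $\omega_1 \leq \omega_2 \leq \cdots$ for all iterations that are actually executed. The boundary case $\abs{\mathcal{U}_{k+1}}=0$ needs no argument, because the loop ends without computing a further $\omega$.

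For the consequence, suppose we are in the non-trivial case and the else-branch fires at iteration $K$, so $\omega^* = \omega_K$. Every $i \in \mathcal{F}$ was removed at some iteration $k<K$ through the if-branch. That branch was taken only because $\omega_k \geq p^{\text{req}}_{i}$. Monotonicity then gives $\omega^* = \omega_K \geq \omega_k \geq p^{\text{req}}_i$.

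I expect no genuine obstacle here. The only care needed is bookkeeping: the inequality in the identity relies on $\abs{\mathcal{U}_k}\geq 2$ whenever $\omega_{k+1}$ is defined, which holds because $\mathcal{U}_{k+1}$ is nonempty in that case. It is also worth noting that the ascending sort order is not needed for this lemma. It is used later in the envy-freeness argument, where one must show $\omega^* < p^{\text{req}}_j$ for all $j\in\mathcal{U}$.
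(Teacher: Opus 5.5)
Your proof is correct and is essentially the paper's argument. The paper likewise uses the if-branch condition $p^{\text{req}}_i \leq \omega$ to get $\omega_{\text{new}} = (C - p^{\text{req}}_i)/(|\mathcal{U}|-1) \geq (C-\omega)/(|\mathcal{U}|-1) = \omega$; you write out the exact increment $(\omega_k - p^{\text{req}}_{i_k})/(|\mathcal{U}_k|-1)$ instead, and you derive $\omega^* \geq p^{\text{req}}_i$ for $i \in \mathcal{F}$ in the same way. One minor correction to your closing aside: in the paper, the ascending order is what gives $p^{\text{req}}_j > \omega^*$ for $j \in \mathcal{U}$, and this is used in the Pareto-efficiency part (to get $p_j \leq p^{\text{req}}_j$), not in envy-freeness, which goes through without the ordering.
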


\begin{proof}
Suppose EV $i$ is fulfilled at some for-loop iteration, with current residual capacity $C$, current unfulfilled set $\mathcal{U}$, and $\omega=C/|\mathcal{U}|$. The fulfillment condition gives $p_i^{\text{req}}\leq \omega$. If at least one EV remains after removing $i$, then the next value of $\omega$ satisfies
\[
\omega_{\text{new}}
=
\frac{C-p_i^{\text{req}}}{|\mathcal{U}|-1}
\geq
\frac{C-\omega}{|\mathcal{U}|-1}
=
\omega.
\]
Thus, $\omega$ is non-decreasing before the final else branch is reached. Since each $i\in\mathcal{F}$ is fulfilled only when $p_i^{\text{req}}\leq \omega$, and later values of $\omega$ cannot decrease, the terminal value satisfies $\omega^*\geq p_i^{\text{req}}$ for all $i\in\mathcal{F}$.
\end{proof}

\subsubsection{Envy-Freeness}

We verify $u_i(p_i) \geq u_i(p_j)$ for all $i, j$.

\begin{enumerate}[label=(\roman*)]
    \item $i \in \mathcal{F}$: $u_i(p_i) = 1 \geq u_i(p_j)$ for all $j$. No envy.
    \item $i, j \in \mathcal{U}$: $p_i = p_j = \omega^*$, so $u_i(p_i) = u_i(p_j)$. No envy.
    \item $i \in \mathcal{U},\ j \in \mathcal{F}$: By the Lemma \ref{lem:lemma}, $p_i = \omega^* \geq p^{\text{req}}_j = p_j$. Since $u_i(\cdot)$ is non-decreasing, $u_i(p_i) \geq u_i(p_j)$. No envy.
\end{enumerate}

All cases are covered. The allocation is envy-free. $\square$

\subsubsection{Pareto Efficiency}

We establish two properties and show they together imply Pareto efficiency.

\begin{enumerate}[label=(\roman*)]
\item We claim $p_i = \min(\omega^*, p^{\text{req}}_i) \leq p^{\text{req}}_i$ for all $i \in \mathcal{E}$. For $i \in \mathcal{F}$, the algorithm sets $p_i = p^{\text{req}}_i$, and by the Lemma \ref{lem:lemma}, $\omega^* \geq p^{\text{req}}_i$, so $p_i = p^{\text{req}}_i = \min(\omega^*, p^{\text{req}}_i)$. For $i \in \mathcal{U}$, the algorithm sets $p_i = \omega^*$ via the else branch. Since EVs are processed in ascending order of $p^{\text{req}}_i$ and the else branch triggers when $\omega < p^{\text{req}}_i$ for the current EV, every EV remaining in $\mathcal{U}$ has $p^{\text{req}}_i > \omega^*$, so $p_i = \omega^* = \min(\omega^*, p^{\text{req}}_i)$. In both cases $p_i \leq p^{\text{req}}_i$.
\item We claim $\sum_{i \in \mathcal{E}} p_i = C_0$. To verify, note that by construction $\omega^* = (C_0 - \sum_{i \in \mathcal{F}} p^{\text{req}}_i)/|\mathcal{U}|$, so

$$\sum_{i \in \mathcal{E}} p_i = \sum_{i \in \mathcal{F}} p^{\text{req}}_i + |\mathcal{U}|\,\omega^* = \sum_{i \in \mathcal{F}} p^{\text{req}}_i + \Bigl(C_0 - \sum_{i \in \mathcal{F}} p^{\text{req}}_i\Bigr) = C_0.$$
\end{enumerate}

Pareto efficiency follows from (i) and (ii). If $C_0 = \sum_i p^{\text{req}}_i$, all utilities equal 1 and no improvement is possible. If $C_0 = p^{\text{CS}} < \sum_i p^{\text{req}}_i$, consider any feasible $\boldsymbol{p}' \in \mathcal{P}_{\mathcal{E}}$ with $u_i(p'_i) > u_i(p_i)$ for some $i$. Since $u_i(p'_i) > u_i(p_i)$, we have $u_i(p_i) < 1$, hence $p_i < p^{\text{req}}_i$. Since $u_i(\cdot)$ is strictly increasing on $[0, p^{\text{req}}_i]$, this requires $p'_i > p_i$. Since $\sum_j p'_j \leq p^{\text{CS}} = \sum_j p_j$ by (ii), we have $\sum_{j \neq i} p'_j < \sum_{j \neq i} p_j$, so $p'_k < p_k$ for some $k \neq i$. Since $p_k > 0$ and $p_k \leq p^{\text{req}}_k$ by (i), strict monotonicity gives $u_k(p'_k) < u_k(p_k)$. Hence the allocation is Pareto efficient. $\square$

\subsubsection{Proportionality}

We show $u_i(p_i) \geq u_i(p^{\text{CS}})/|\mathcal{E}|$ for all $i \in \mathcal{E}$. Since we are considering the case $\mathcal{U} \neq \emptyset$, (otherwise, fairness is trivial), $p^{\text{CS}} < \sum_i p^{\text{req}}_i$ holds, so $C_0 = p^{\text{CS}}$ and the initial value of $\omega$ is $p^{\text{CS}}/|\mathcal{E}|$. By the Lemma \ref{lem:lemma}, $\omega^* \geq p^{\text{CS}}/|\mathcal{E}|$.

For $i \in \mathcal{F}$: $u_i(p_i) = 1 \geq u_i(p^{\text{CS}})/|\mathcal{E}|$, since $u_i(p^{\text{CS}}) \leq 1$. On the other hand, for $i \in \mathcal{U}$: $p_i = \omega^* \geq p^{\text{CS}}/|\mathcal{E}|$. We use the property

\begin{equation}\label{eq:star}
u_i\!\left(\frac{x}{a}\right) \geq \frac{u_i(x)}{a} \quad \text{for all } x \geq 0,\; a \geq 1,   
\end{equation}

\noindent which holds by cases: if $x < p^{\text{req}}_i$, both sides equal $x/(a\,p^{\text{req}}_i)$, so equality holds; if $x \geq p^{\text{req}}_i$, then $u_i(x) = 1$ and $u_i(x/a) = \min\!\bigl(x/(a\,p^{\text{req}}_i),\, 1\bigr)$. Since $x/p^{\text{req}}_i \geq 1$ we have $x/(a\,p^{\text{req}}_i) \geq 1/a$, and since $a \geq 1$ we also have $1 \geq 1/a$, so both arguments of the minimum are at least $1/a$, giving $u_i(x/a) \geq 1/a = u_i(x)/a$. Applying \eqref{eq:star} with $x = p^{\text{CS}}$ and $a = |\mathcal{E}|$:

  $$u_i(p_i) \geq u_i\!\left(\frac{p^{\text{CS}}}{|\mathcal{E}|}\right) \geq \frac{u_i(p^{\text{CS}})}{|\mathcal{E}|}.$$

Proportionality holds for all $i \in \mathcal{E}$. $\square$

Since all three fairness criteria are satisfied, \textsc{Fair-Opap-C} guarantees a fair allocation. $\square$

\subsection{Proof of Theorem \ref{thm:m}}\label{appendix:m_proof}

We prove the theorem by verifying that the allocation produced by \textsc{Fair-Opap-M} (Alg.~\ref{alg:opap-m}) satisfies the three fairness criteria: envy-freeness up to one module (EF1), Pareto efficiency, and proportionality.

We call an iteration round $r$ of \textsc{Fair-Opap-M} \textit{complete} if all EVs in $\mathcal{U}_r$, the unfulfilled set at the start of round $r$, receive a module in that round, and \textit{partial} otherwise, i.e., if $C=0$ before all EVs in $\mathcal{U}_r$ are served. Let $T$ denote the total number of rounds, let $\mathcal{F}$ and $\mathcal{U}$ denote the fulfilled and unfulfilled EV sets at termination, respectively, and define $C_0 := \min \bigl(m^{\text{CS}}, \sum_{i \in \mathcal{E}} \lceil m^{\text{req}}_i \rceil\bigr)$. If $\mathcal{U}=\emptyset$, then every EV is fulfilled with $m_i=\lceil m_i^{\text{req}}\rceil$, all utilities are equal to 1, and all fairness criteria hold trivially. We therefore focus on the non-trivial case $\mathcal{U}\neq\emptyset$.

\begin{lemma}[Uniformity of unfulfilled allocations]\label{lem:2}
At the start of each round $r$, every EV in $\mathcal{U}_r$ has exactly $r - 1$ modules.
\end{lemma}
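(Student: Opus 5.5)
We prove Lemma~\ref{lem:2} by induction on the round index $r$, using only the structure of Alg.~\ref{alg:opap-m}.

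\emph{Base case, $r=1$.} By \textbf{L3} every EV starts with $m_i=0$, and $\mathcal{U}_1=\mathcal{E}$. Hence every EV in $\mathcal{U}_1$ holds exactly $0=r-1$ modules.

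\emph{Inductive step.} Assume the claim holds at the start of round $r$, and that round $r$ is followed by a round $r+1$. Round $r$ executes \textbf{L5}--\textbf{L15} once over $\mathcal{U}_{\text{sorted}}$, which is a permutation of $\mathcal{U}_r$ regardless of the \textsc{SortingPolicy}. Inside the for-loop, each EV is visited at most once per round, and each visit adds exactly one module (\textbf{L7}).

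Removals from $\mathcal{U}$ during the loop (\textbf{L9}) do not change the iteration sequence, since $\mathcal{U}_{\text{sorted}}$ was fixed at \textbf{L5}. They also do not affect whether round $r+1$ exists.

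The key observation is that a round $r+1$ exists only if the while-condition $C>0$ still holds after round $r$. This means the \textbf{break} at \textbf{L12} was not triggered before the last element, so round $r$ was complete: every EV in $\mathcal{U}_r$ received exactly one module.

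One edge case needs care. If $C$ reaches $0$ exactly on the last EV of the round, the break fires but the while-loop also terminates, so no round $r+1$ exists and nothing needs to be proved. It follows that only the final round can be partial.

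\emph{Conclusion of the step.} Since $\mathcal{U}_{r+1}\subseteq\mathcal{U}_r$ (EVs are only ever removed, never added), each EV in $\mathcal{U}_{r+1}$ had $r-1$ modules at the start of round $r$ by the induction hypothesis, and received exactly one in the complete round $r$. It therefore holds $r=(r+1)-1$ modules, which closes the induction.

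The only delicate point, and hence the main obstacle, is the argument that every non-final round is complete. I would state it explicitly as an auxiliary fact before the induction: the while-loop at \textbf{L4} re-tests $C>0$ only after a round has ended, and a break inside the loop forces $C=0$. I would also note, for use in later steps of the proof of Theorem~\ref{thm:m}, that EVs removed during round $r$ are exactly those with $r=\lceil m_i^{\text{req}}\rceil$. This gives $m_i=\lceil m_i^{\text{req}}\rceil$ for every $i\in\mathcal{F}$.
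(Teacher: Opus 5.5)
Your proof is correct and follows the paper's argument: induction on $r$, with the base case from initialization and an inductive step showing that each EV remaining in $\mathcal{U}_{r+1}\subseteq\mathcal{U}_r$ gains exactly one module in round $r$. Your justification that a round followed by another must be complete (because a \textbf{break} forces $C=0$ and ends the while-loop) is only a more explicit version of the paper's complete/partial case split.
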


\begin{proof}
Induction on $r$. For $r=1$, all EVs have $m_i=0=1-1$ by initialization, so the claim holds.

Assume that the claim holds at the start of round $r$, i.e., every $i\in\mathcal{U}_r$ has $r-1$ modules. If round $r$ is complete, then every EV in $\mathcal{U}_r$ receives one module and therefore reaches $r$ modules. The EVs that reach $m_i=\lceil m_i^{\text{req}}\rceil$ are removed, and the remaining unfulfilled EVs form $\mathcal{U}_{r+1}$. Hence every EV in $\mathcal{U}_{r+1}$ has exactly $r$ modules, so the claim holds at the start of round $r+1$.

If round $r$ is partial, then the algorithm terminates in that round, and no round $r+1$ begins. Thus, there is no further induction step to verify. Therefore, the statement holds for every round that begins.
\end{proof}

\begin{corollary}\label{cor:1}
For any $i \in \mathcal{U}$ and $j \in \mathcal{F}$, we have
$m_i \geq m_j - 1$.
\end{corollary}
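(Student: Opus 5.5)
Since $i \in \mathcal{U}$ is still unfulfilled at termination, it was a member of every unfulfilled set that began a round, so the approach is to compare the final counts of $i$ and $j$ through Lemma~\ref{lem:2}. Let $T$ be the last round that begins. By Lemma~\ref{lem:2}, $i$ holds exactly $T-1$ modules at the start of round $T$. It receives either zero or one module in that round, depending on whether the round is partial and where $i$ sits in the sorted order. Hence $m_i \in \{T-1, T\}$, and in particular $m_i \geq T-1$.

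Next we bound $m_j$ for $j \in \mathcal{F}$. The EV $j$ receives at most one module per round, and only in rounds in which it belongs to the current unfulfilled set. Once it is fulfilled it is removed from $\mathcal{U}$ and never receives another module. Since at most $T$ rounds occur, $m_j \leq T$. Slightly more precisely, let $r_j \leq T$ be the round in which $j$ becomes fulfilled. Then $j \in \mathcal{U}_{r_j}$, so by Lemma~\ref{lem:2} it had $r_j - 1$ modules at the start of that round, and it receives exactly one more, giving $m_j = r_j \leq T$.

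Combining the two bounds gives $m_i \geq T-1 \geq m_j - 1$, which is the claim. The only point needing care is the final round, which may be partial. The argument does not depend on whether $i$ or $j$ is served in round $T$: the lower bound on $m_i$ uses only the module count at the start of round $T$, and the upper bound on $m_j$ uses only the fact that each EV gains at most one module per round. We should also check that a fulfilled EV never receives modules after its removal. This holds because the for-loop iterates over $\mathcal{U}_{\text{sorted}}$, which is recomputed from $\mathcal{U}$ at the start of each round. There is one subtlety: $j$ is removed from $\mathcal{U}$ during round $r_j$, but the loop in that round still runs over the pre-sorted list. However, $j$ appears in that list only once, so it still gains only one module in that round.
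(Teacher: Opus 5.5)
Your proof is correct and follows essentially the same route as the paper. Both arguments use Lemma~\ref{lem:2} twice: once to get $m_j = r_j$, where $r_j$ is the round in which $j$ is fulfilled, and once to lower-bound $m_i$ by the modules $i$ holds at the start of a round in which it must participate. The only difference is cosmetic: the paper compares at $j$'s fulfillment round $\ell$, giving $m_i \geq \ell - 1 = m_j - 1$ directly, whereas you compare at the final round $T$ and then use $r_j \leq T$.
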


\begin{proof}
Suppose $j \in \mathcal{F}$ is fulfilled in round $\ell$. By Lemma~\ref{lem:2}, $j$ has $\ell-1$ modules at the start of round $\ell$ and receives one module in that round. Hence $m_j=\ell=\lceil m_j^{\text{req}}\rceil$. Since $i \in \mathcal{U}$ is never fulfilled, it belongs to $\mathcal{U}_\ell$. Again by Lemma~\ref{lem:2}, $i$ has $\ell-1$ modules at the start of round $\ell$, and its allocation cannot decrease afterward. Therefore, at termination, $m_i \geq \ell-1=m_j-1$.
\end{proof}

\subsubsection{EF1}

We verify $u_i(m_i) \geq u_i(m_j - 1)$ for all $i,j \in \mathcal{E}$. Here, we use the convention $u_i(x)=0$ for $x<0$.

\begin{enumerate}[label=(\roman*)]
    \item $i \in \mathcal{F}$: $u_i(m_i) = 1 \geq u_i(m_j - 1)$ for all $j$.
    \item $i, j \in \mathcal{U}$: By Lemma~\ref{lem:2}, all EVs in $\mathcal{U}$ have the same allocation at the start of the final round; the final partial round, if any, increases some but not all by 1. Hence $|m_i - m_j| \leq 1$, so $m_i \geq m_j - 1$. Since $u_i(\cdot)$ is non-decreasing, $u_i(m_i) \geq u_i(m_j - 1)$.
    \item $i \in \mathcal{U},\; j \in \mathcal{F}$: By Corollary~\ref{cor:1}, $m_i \geq m_j - 1$. Since $u_i(\cdot)$ is non-decreasing, $u_i(m_i) \geq u_i(m_j - 1)$.
\end{enumerate}

All cases are covered. The allocation satisfies EF1. $\square$

\subsubsection{Pareto Efficiency}

We establish two conditions and show they together imply Pareto efficiency.

\begin{enumerate}[label=(\roman*)]
    \item We claim $m_i \leq \lceil m^{\text{req}}_i \rceil$ for all $i \in \mathcal{E}$. For $i \in \mathcal{F}$, the algorithm removes $i$ from $\mathcal{U}$ exactly when $m_i = \lceil m^{\text{req}}_i \rceil$, after which $i$ receives no further modules, so $m_i = \lceil m^{\text{req}}_i \rceil$. For $i \in \mathcal{U}$, if $m_i$ had reached $\lceil m^{\text{req}}_i \rceil$, the algorithm would have removed $i$ from $\mathcal{U}$, contradicting $i \in \mathcal{U}$. Hence $m_i < \lceil m^{\text{req}}_i \rceil$ for all $i \in \mathcal{U}$. In both cases, $m_i \leq \lceil m^{\text{req}}_i \rceil$.
    \item We claim $\sum_{i \in \mathcal{E}} m_i = \min\!\big(m^{\text{CS}}, \sum_{i \in \mathcal{E}} \lceil m^{\text{req}}_i\rceil\big)$. The counter $C$ is initialized to $C_0$ and decremented by exactly 1 per module allocated. The loop terminates when $C = 0$, so $\sum_{i \in \mathcal{E}} m_i = C_0 = \min \bigl(m^{\text{CS}}, \sum_i \lceil m^{\text{req}}_i \rceil\bigr)$.
\end{enumerate}

Pareto efficiency follows from (i) and (ii). If $C_0 = \sum_i \lceil m^{\text{req}}_i \rceil$, all EVs are fully satisfied and no improvement is possible. If $C_0 = m^{\text{CS}} < \sum_i \lceil m^{\text{req}}_i \rceil$, consider any feasible $\boldsymbol{m}' \in \mathcal{M}_{\mathcal{E}}$ with $u_i(m'_i) > u_i(m_i)$ for some $i$. Since $u_i(m'_i) > u_i(m_i)$, we have $u_i(m_i) < 1$, hence $m_i < m^{\text{req}}_i$, and strict monotonicity of $u_i(\cdot)$ on $[0, m^{\text{req}}_i]$ requires $m'_i > m_i$. Since $\sum_j m'_j \leq m^{\text{CS}} = \sum_j m_j$ by (ii), we have $m'_k < m_k$ for some $k \neq i$. By (i), $m_k \leq \lceil m^{\text{req}}_k \rceil$, and since allocations are integers, $m'_k \leq m_k - 1 < \lceil m^{\text{req}}_k \rceil$, so $m'_k < m^{\text{req}}_k$ and therefore $u_k(m'_k) = m'_k/m^{\text{req}}_k < 1$. If $m_k \geq m^{\text{req}}_k$, then $u_k(m_k) = 1 > u_k(m'_k)$. If $m_k < m^{\text{req}}_k$, then both $m'_k$ and $m_k$ lie in the strictly increasing region of $u_k$, so $m'_k < m_k$ gives $u_k(m'_k) < u_k(m_k)$. In both cases $u_k$ strictly decreases. Hence the allocation is Pareto efficient. $\square$

\subsubsection{Proportionality}

We show $u_i(m_i) \geq u_i(m^{\text{CS}})/|\mathcal{E}|$ for all $i \in \mathcal{E}$. For $i \in \mathcal{F}$, $u_i(m_i) = 1$, so proportionality holds trivially. If there exists some $i \in \mathcal{U}$, then not all rounded demands were satisfied, and therefore $C_0 = m^{\text{CS}} < \sum_j \lceil m^{\text{req}}_j \rceil$. We use this fact to establish a lower bound on $m_i$.

\begin{lemma}[Lower bound for unfulfilled EVs]\label{lem:3}
For all $i \in \mathcal{U}$, $m_i \geq \lfloor m^{\text{CS}}/|\mathcal{E}| \rfloor$.
\end{lemma}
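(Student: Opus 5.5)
We argue by contradiction. Fix $i \in \mathcal{U}$ and let $k := \lfloor m^{\text{CS}}/|\mathcal{E}| \rfloor$. Suppose $m_i \le k-1$. The plan is to show that the number of modules handed out then falls strictly below $m^{\text{CS}}$, which contradicts claim (ii) of the Pareto efficiency part: since $\mathcal{U}\neq\emptyset$, we have $\sum_j m_j = C_0 = m^{\text{CS}}$.

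First, we locate the round in which the algorithm stops. Let $T$ be the final round. Since $i$ is never fulfilled, $i \in \mathcal{U}_T$, and by Lemma~\ref{lem:2} it holds $T-1$ modules at the start of round $T$. Hence $m_i \in \{T-1, T\}$, and so $T-1 \le m_i \le k-1$.

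Next, we bound every EV's final allocation by $T$, splitting into two cases.
\begin{itemize}
\item An EV $j\in\mathcal{U}$ (still unfulfilled) starts round $T$ with $T-1$ modules by Lemma~\ref{lem:2} and gains at most one more, so $m_j\le T$.
\item An EV $j\in\mathcal{F}$ fulfilled in round $\ell \le T$ has $m_j = \ell \le T$, as in the proof of Corollary~\ref{cor:1}.
\end{itemize}
Thus $m_j \le T$ for all $j$. To sharpen this, note that EV $i$ itself has $m_i \le k-1$. If $T \le k-1$, then summing gives
\[
\sum_j m_j \le |\mathcal{E}|\,T \le |\mathcal{E}|(k-1) < |\mathcal{E}|\,k \le m^{\text{CS}},
\]
which is the desired contradiction. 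The only remaining case is $T = k$ with $m_i = T-1 = k-1$, meaning $i$ did not receive a module in the final round $T$. That round must then be partial.

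The delicate step is this partial-round case. There, every EV has at most $T = k$ modules, and at least one EV (namely $i$) has $k-1$. Therefore
\[
\sum_j m_j \le |\mathcal{E}|\,k - 1 < |\mathcal{E}|\,k \le m^{\text{CS}},
\]
and the contradiction holds again. In every case we reach a contradiction, so $m_i \ge \lfloor m^{\text{CS}}/|\mathcal{E}|\rfloor$.

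Assumption~\ref{as:assumption} is not needed for this lemma itself. It presumably enters afterwards, when the paper converts the bound $m_i \ge \lfloor m^{\text{CS}}/|\mathcal{E}|\rfloor$ into $u_i(m_i) \ge u_i(m^{\text{CS}})/|\mathcal{E}|$ while handling the floor rounding, using $|\mathcal{E}|\le n^{\text{port}}$ and $m^{\text{req}}_i \le m^{\text{port}}$.
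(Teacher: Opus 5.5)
Your proof is correct and is essentially the paper's argument: both use $m_i \ge T-1$ and compare $m^{\text{CS}}=\sum_j m_j$ against $T|\mathcal{E}|$, with the strict inequality coming from a partial final round. The paper bounds the modules handed out per round by $|\mathcal{E}|$ and argues directly, whereas you bound each EV's final allocation by $T$ (the same double count organized the other way) and argue by contradiction; your remark that Assumption~\ref{as:assumption} is not needed here is also right, since the paper invokes it only afterwards to get $u_i(\lfloor m^{\text{CS}}/|\mathcal{E}|\rfloor)\ge 1/|\mathcal{E}|$.
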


\begin{proof}
Since $i \in \mathcal{U}$, EV $i$ is never fulfilled, so it belongs to $\mathcal{U}_t$ for every round $t=1,\ldots,T$. In every complete round, EV $i$ receives one module. It may miss a module only in the final round if that round is partial. Hence $m_i \geq T-1$.

\begin{enumerate}[label=(\roman*)]
    \item If round $T$ is complete, then EV $i$ receives one module in all $T$ rounds, so $m_i=T$. Since at most $|\mathcal{E}|$ modules are allocated per round and $C_0=m^{\mathrm{CS}}$, we have $m^{\mathrm{CS}}=C_0\leq T|\mathcal{E}|$. Therefore, $m_i=T\geq \lfloor m^{\mathrm{CS}}/|\mathcal{E}|\rfloor$.

    \item If round $T$ is partial, then fewer than $|\mathcal{E}|$ modules are allocated in round $T$, and at most $|\mathcal{E}|$ modules are allocated in each of the first $T-1$ rounds. Thus $m^{\mathrm{CS}}=C_0<T|\mathcal{E}|$. Since $T$ is an integer, this implies $T-1\geq \lfloor m^{\mathrm{CS}}/|\mathcal{E}|\rfloor$. Together with $m_i\geq T-1$, this gives $m_i \geq \lfloor m^{\mathrm{CS}}/|\mathcal{E}|\rfloor$.
\end{enumerate}
\end{proof}

With Lemma~\ref{lem:3}, set $a := \lfloor m^{\text{CS}}/|\mathcal{E}| \rfloor$. For any $i \in \mathcal{U}$, we aim to show

\begin{equation}\label{eq:ineqchain}
    u_i(m_i)
\geq
u_i(a)
\geq
\frac{1}{|\mathcal{E}|}
\geq
\frac{u_i(m^{\text{CS}})}{|\mathcal{E}|}.
\end{equation}

The first inequality follows from Lemma~\ref{lem:3} and the monotonicity of $u_i(\cdot)$. The last inequality follows from the definition of $u_i(\cdot)$, which gives $u_i(m^{\text{CS}})\leq 1$. It remains to prove the middle inequality, i.e., $u_i(a)\geq 1/|\mathcal{E}|$.

To this end, it suffices to show $|\mathcal{E}|a \geq m^{\text{req}}_i$, since this implies $a/m^{\text{req}}_i \geq 1/|\mathcal{E}|$ and hence $u_i(a)=\min(a/m^{\text{req}}_i,1)\geq 1/|\mathcal{E}|$. By the standard floor bound,
\[
|\mathcal{E}|a
=
|\mathcal{E}|\left\lfloor \frac{m^{\text{CS}}}{|\mathcal{E}|}\right\rfloor
\geq
m^{\text{CS}}-(|\mathcal{E}|-1).
\]
Moreover, by \eqref{eq:mcs_lb} of Assumption \ref{as:assumption}, $m^{\text{req}}_i \leq m^{\text{port}}$, and $|\mathcal{E}| \leq n^{\text{port}}$, we have
\[
m^{\text{CS}}
\geq
m^{\text{port}}+(n^{\text{port}}-1)
\geq
m^{\text{req}}_i+(|\mathcal{E}|-1).
\]
Therefore,
\[
|\mathcal{E}|a
\geq
m^{\text{CS}}-(|\mathcal{E}|-1)
\geq
m^{\text{req}}_i,
\]

\noindent which proves $u_i(a)\geq 1/|\mathcal{E}|$, completing \eqref{eq:ineqchain} and thereby establishing proportionality. $\square$

Since all three fairness criteria are satisfied, \textsc{Fair-Opap-M} guarantees a fair allocation. $\square$

\end{document}